\documentclass[letterpaper, 10 pt, conference]{ieeeconf}  

\pdfoutput=1
\IEEEoverridecommandlockouts                              
\usepackage{xcolor}
\usepackage{amsmath}
\usepackage{amssymb}
 
\usepackage{amsthm}
\usepackage{amsfonts}
\usepackage{graphicx}      
\newtheorem{theorem}{Theorem}
\newtheorem{lemma}{Lemma}
\newtheorem{proposition}{Proposition}

\theoremstyle{definition}
\newtheorem{definition}{Definition}
\newtheorem{assumption}{Assumption}
\newtheorem{remark}{Remark}

\DeclareMathOperator*{\argmin}{arg\,min}

\newcommand{\real}{\mathbb{R}}
\newcommand{\realnonneg}{\mathbb{R}_{\ge 0}}

\newcommand{\until}[1]{[#1]}
\newcommand{\map}[3]{#1:#2 \rightarrow #3}
\newcommand{\setdef}[2]{\{#1 \; | \; #2\}}
\newcommand{\setdefb}[2]{\big\{#1 \; | \; #2\big\}}
\newcommand{\setdefB}[2]{\Big\{#1 \; \Big| \; #2\Big\}}

\renewcommand{\bf}{\mathbf{f}} 
\newcommand{\bg}{\mathbf{g}}

\newcommand{\bk}{\mathbf{k}}

\newcommand{\bp}{\mathbf{p}}

\newcommand{\bu}{\mathbf{u}}

\newcommand{\bx}{\mathbf{x}}

\newcommand{\bz}{\mathbf{z}}

\newcommand{\bphi}{\boldsymbol{\phi}}

\newcommand{\Cc}{\mathcal{C}}
\newcommand{\Dc}{\mathcal{D}}

\newcommand{\Jc}{\mathcal{J}}
\newcommand{\Kc}{\mathcal{K}}
\newcommand{\Oc}{\mathcal{O}}

\newcommand{\Uc}{\mathcal{U}}
\newcommand{\Rc}{\mathcal{R}}

\newcommand{\Sc}{\mathcal{S}}

\newcommand{\Xc}{\mathcal{X}}

\renewcommand{\baselinestretch}{.98}
\allowdisplaybreaks

\usepackage[compress]{cite}

\title{\LARGE 
\textbf{Steering with Contingencies: \\Combinatorial Stabilization and Reach-Avoid Filters}
}

\author{Yana Lishkova$^{1}$, Pio Ong$^{1}$, Sander Tonkens$^{2}$,
Sylvia Herbert$^{2}$, Aaron D.\ Ames$^{1}$%
\thanks{This research was supported by the Technology Innovation
Institute and the AFOSR Grant No.\ 113535-19668.}%
\thanks{$^{1}$California Institute of Technology, Pasadena, CA, USA.
{\tt\small \{ylishkova, pioong, ames\}@caltech.edu}}%
\thanks{$^{2}$University of California, San Diego, CA, USA.
{\tt\small \{stonkens, sherbert\}@ucsd.edu}}%
}

\begin{document}

\maketitle
\thispagestyle{empty}
\pagestyle{empty}


\begin{abstract}
In applications such as autonomous landing and navigation, it is often desirable to steer toward a target while retaining the ability to divert to at least $r$ (out of $p$) alternative sites if conditions change. In this work, we formalize this combinatorial contingency requirement and develop tractable control filters for enforcement. \emph{Combinatorial stabilization} requires asymptotic stability of a selected equilibrium while ensuring the trajectory remains within the safe region of attraction of at least $r$-out-of-$p$ candidates. To enforce this requirement, we use control Lyapunov functions (CLFs) to construct regions of attraction, which are combined combinatorially within an optimization-based filter. \emph{Combinatorial targeting} extends this framework to finite-horizon problems using Hamilton-Jacobi backward reach-avoid sets, accommodating shrinking reachable regions due to finite horizons or resource depletion. In both formulations, the resulting \emph{combinatorial stability filter} and \emph{combinatorial reach-avoid filter} require only $p\!+\!1$ constraints, preventing combinatorial blow-up and enabling safe real-time switching between targets. The framework is demonstrated on two examples where the filters ensure steering with contingency and enable safe diversion.
\end{abstract}

\section{Introduction}

Contingency planning is highly valuable in autonomy. A spacecraft may need to abort a landing and divert to an alternative site upon detecting a hazard. An autonomous vehicle may need to remain within reach of charging stations as its battery depletes. In such scenarios, the system must steer toward an active target while ensuring that at least $r$-out-of-$p$ candidate targets remain reachable at all times along its trajectory. Naive formulations of this requirement are computationally intractable, as they would need to account for every possible combination of $r$ among the $p$ candidates.

Control Lyapunov functions (CLFs) have long served as the primary tool for certifying and enforcing stability~\cite{Bacciotti05_LiapunovStability, Artstein83_CLF, Sontag89_CLF}, while control barrier functions (CBFs) provide a complementary framework for safety~\cite{Ames19_CBFs}. Together they guarantee both stability and safety through a single efficient optimization at each state with straightforward inclusion of additional safety CBF constraints~\cite{Ames19_CBFs}. Such formulations encode a boolean conjunction over all constraints, requiring every safety certificate to be satisfied simultaneously. Attempts to encode richer logical operations have brought issues such as non-smoothness, combinatorial blow-up, or restrictive compatibility requirements to the forefront of research (see~\cite{Glotfelter17_NonsmoothCBF, Tan22_CompatibilityCheckingCBFs, molnar2023composing}). The combinatorial CBF framework~\cite{Ong25_combinatorialCBF} addresses some of these issues by enabling $r$-out-of-$p$ safety certificate satisfaction with exactly $p$ smooth constraints, avoiding combinatorial blow-up while preserving the safe set. Subsequent work~\cite{Ong26_ComboBackupCBF} improves the compatibility of multiple safety constraints. However, it does not address the joint problem of stabilization and contingency planning.

\textcolor{black}{Hamilton-Jacobi reachability (HJR) analysis circumvents the compatibility issue of multiple target, obstacle, and control constraints by encoding them within a single reach-avoid value function~\cite{Margellos11_ReachAvoid, Fisac15_ReachAvoid}.} 
Variations of HJR can be used to construct functions that satisfy CLF/CBF-like conditions~\cite{Hirsch2025ViscosityCBF, Begzadic25_SafeResets_Learning_HJB}. Although a single reach-avoid value function can encode multiple obstacles and targets, its default formulation does not allow for steering toward a nominal target while preserving others as alternative feasible backups. Multi-target reach-avoid problems have been studied in temporal-logic and task-composition frameworks, including minimally intervening filter formulations (e.g.,~\cite{Chen18_STLReachability, Jiang24_TemporalLogicTrees_HJB, Chen25_MultiReachAvoid_HJB, sharpless2025dual}), however they focus on sequential, logic- or temporal-rule specified target visits rather than maintaining simultaneous feasibility of multiple alternatives. Reachability has also been used to construct filters that guarantee a single escape maneuver, a single reset region, or online alternative-target assessment (e.g.,~\cite{Leung20_SafetyAssurance_HJB,
Begzadic25_SafeResets_Learning_HJB, lishkova2024divert}). Mixed-integer programming (MIP) formulations offer a  general approach to encoding logical constraint combinations but scale poorly with the number of constraints, limiting real-time applicability~\cite{Bemporad99_MIP}. 

In this paper, we extend the combinatorial control certificate  framework of~\cite{Ong25_combinatorialCBF, Ong26_ComboBackupCBF} to both Lyapunov-based and Hamilton-Jacobi-based certificates, enabling the system to pursue a selected target while remaining within reach of multiple alternatives without combinatorial blow-up. We first focus on \emph{combinatorial stabilization}, where the objective is to asymptotically stabilize a selected equilibrium while guaranteeing the closed-loop trajectory remains within reach of at least $r$ alternative equilibria at all times. For this purpose we use sublevel sets of local control Lyapunov functions to define a \emph{combinatorial region of attraction}, i.e., the set of states from which at least $r$ of the $p$ equilibria can be reached. This enables the construction of a \emph{combinatorial stabilization filter} that simultaneously drives the system toward the selected equilibrium and prevents it from leaving this set, guaranteeing asymptotic stability of the selected equilibrium, forward invariance of the combinatorial region of attraction, and continuity of the resulting controller. 

We then introduce \emph{combinatorial targeting} for finite-horizon problems where the set of states from which a target can be safely reached shrinks over time due to resource depletion. The goal is to reach a selected target within a given horizon while remaining within reach of at least $r$ candidate targets at all times. For this purpose, we construct a \emph{combinatorial backward reach-avoid set} from Hamilton-Jacobi reach-avoid value functions associated with each candidate target. Formulating an optimization-based \emph{combinatorial reach-avoid filter}, we prove that the selected target is reached while remaining in the combinatorial set throughout. This formulation is suited to problems where a CLF is difficult to obtain, though  grid-based HJR methods inherit the curse of dimensionality. In both formulations, the resulting controllers can be implemented as a quadratic program (QP) with $p\!+\!1$ constraints at each time step, admitting real-time online implementation and guaranteeing safe switching in the event of a target change. \textcolor{black}{We demonstrate the proposed framework on two examples, showing that the filters steer toward the selected target, while remaining in reach of $r$-out-of-$p$ others and enable safe target switching in scenarios where the nominal controller alone violates safety.}

\section{Background}
\noindent \textit{Notation:}
 Let $[p]:\!=\!\{1,\dots,p\}$ and $C^1$ denote the set of continuously differentiable functions. A continuous function $\alpha:[0,a)\!\to\![0,\infty)$ is class-$\mathcal K$ if it is strictly increasing and $\alpha(0)\!=\!0$. A continuous function $\gamma:(-b,a)\!\to\!\mathbb{R}$, $a,b\!>\!0$, is extended class-$\mathcal K$ ($\gamma\!\in\!\mathcal K_e$) if it is strictly increasing and $\gamma(0)\!=\!0$. For a $C^1$ function $h:\mathbb{R}^n \!\to\! \mathbb{R}$ and a vector field $f:\mathbb{R}^n \!\to\! \mathbb{R}^n$, the Lie derivative of $h$ along $f$ is defined as $L_{\bf} h(\bx)\! :=\! \frac{\partial h}{\partial x}(\bx)\, \mathbf{f}(\bx)
\!=\! \nabla h(\bx)^\top \bf(\bx)$. For $r>0$, $B_r(\bx^\star) \!:=\! \{\bx\in\mathbb{R}^n \mid \|\bx\!-\!\bx^\star\|<r\}$ denotes the open ball centered at $\bx^\star$. Additionally, $\mathrm{ReLU}(s) \!:=\! \max\{0,\, s\}$. 

\subsection{Control Lyapunov functions}
Consider the nonlinear control-affine system
\begin{equation}
    \dot{\bx} = \bf(\bx) + \bg(\bx)\bu,
    \label{sys:ctrl_affine}
\end{equation}
where $\bx\!\in\! \mathbb{R}^n$ is the state, $\bu \!\in\! \mathbb{R}^m$ the control input, and the maps $\map{\bf\!}{\!\!\real^n\!}{\!\real^n}$, $\map{\bg\!}{\!\!\real^n\!}{\!\real^{n\times m}}$ are assumed to be continuous. Let $\bx^\star$ be a (controlled) equilibrium of system~\eqref{sys:ctrl_affine}, i.e., there exists an $\bu^\star\!\in\!\mathbb{R}^m$ such that $\bf(\bx^\star)\!+\!\bg(\bx^\star)\bu^\star\! =\!0$. To design feedback controllers that stabilize the equilibrium $\bx^\star$, we employ \emph{control Lyapunov functions}~\cite{Sontag83_CLF, Bacciotti05_LiapunovStability}.

\begin{definition}[Local control Lyapunov function]
\label{def:CLF}
A $C^1$ function $V:\real^n\to\mathbb{R}_{\geq 0}$, is called a \emph{local control Lyapunov function (CLF)} for~\eqref{sys:ctrl_affine} about $\bx^\star$ if $V(\bx^\star)=0$ and there exist class-$\mathcal{K}$ functions $\alpha_1$, $\alpha_2$, and $\alpha$ such that
\begin{subequations}\label{eq:clf_def}
\begin{equation}
\alpha_1(\|\bx\!-\!\bx^\star\|)  \le V(\bx) \le \alpha_2(\|\bx\!-\!\bx^\star\|), \;\;\forall \bx\!\in \!B_r(\bx^\star), 
\end{equation}
\begin{equation}
\inf_{\bu\in\mathbb{R}^m}\dot V(\bx,\bu) \le -\alpha(V(\bx)), \quad\;\;\forall \bx\!\in\! B_r(\bx^\star)\!\setminus\!\{\bx^\star\}, 
\end{equation}
\end{subequations}
for some $r>0$, where $\dot V(\bx,\bu)\triangleq L_{\bf}V(\bx)+L_{\bg}V(\bx)\bu.$~\hfill$\diamond$
\end{definition}

CLFs extend the classical notion of Lyapunov functions to control settings. In particular, conditions~\eqref{eq:clf_def} ensure that, $\forall\bx\!\in\! B_r(\bx^\star)\!\setminus\!\{\bx^\star\}$, there exists a $\bu\!\in\!\mathbb R^m$ such that $\dot V(\bx,\bu) \!<\! 0$, as required in standard Lyapunov stability theory \cite{Bacciotti05_LiapunovStability}.
\begin{lemma}\label{lem:stab}
    Let $V$ be a local CLF for system~\eqref{sys:ctrl_affine}, and let $\map{\bk}{\real^n}{\real^m}$ be  a continuous feedback controller such that
    \begin{equation}\label{eq:lyap_cond}
    \dot V(\bx,\bk(\bx)) \leq -\alpha(V(\bx)), 
    \end{equation}
    and  $\bk(\bx^\star)\!=\!\bu^\star$ for all $\bx\!\in\! B_r(\bx^\star)\!\setminus\!\{\bx^\star\}$. Then the controller 
    $\bk$ asymptotically stabilizes the equilibrium $\bx^\star$.~\hfill$\blacksquare$
\end{lemma}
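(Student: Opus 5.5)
This is the standard Lyapunov argument applied to the closed loop $\dot\bx = \bf(\bx)+\bg(\bx)\bk(\bx)$, using $V$ as a Lyapunov function on a sublevel set contained in $B_r(\bx^\star)$.

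\emph{Equilibrium and solutions.} Since $\bk(\bx^\star)=\bu^\star$, we have $\bf(\bx^\star)+\bg(\bx^\star)\bk(\bx^\star)=0$, so $\bx^\star$ is an equilibrium of the closed loop. The closed-loop vector field is continuous, so Peano's theorem gives local existence of solutions, though uniqueness need not hold. The argument below therefore bounds every solution, so stability holds for all of them.

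\emph{Invariant sublevel set.} Choose $0<\rho<r$ and set $c:=\alpha_1(\rho)>0$. Define $\Omega_c:=\{\bx\in \overline{B_\rho(\bx^\star)} \mid V(\bx)\le c'\}$ with $0<c'<c$. Any $\bx\in\Omega_c$ with $\|\bx-\bx^\star\|=\rho$ would satisfy $V(\bx)\ge\alpha_1(\rho)=c>c'$, so $\Omega_c$ lies strictly inside the ball.

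Along a solution starting in $\Omega_c$, \eqref{eq:lyap_cond} gives $\frac{d}{dt}V(\bx(t))\le-\alpha(V(\bx(t)))\le0$ while the solution stays in $B_r(\bx^\star)$. Hence $V$ is nonincreasing and the solution cannot reach the sphere of radius $\rho$. It therefore remains in the compact set $\Omega_c$ and extends to all $t\ge0$.

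\emph{Stability.} Given $\epsilon\in(0,\rho]$, choose $\delta>0$ with $\alpha_2(\delta)<\alpha_1(\epsilon)$. If $\|\bx(0)-\bx^\star\|<\delta$, then $V(\bx(t))\le V(\bx(0))\le\alpha_2(\delta)<\alpha_1(\epsilon)$ for all $t\ge0$. The invariance argument above, applied with $\epsilon$ in place of $\rho$, then keeps $\|\bx(t)-\bx^\star\|<\epsilon$.

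\emph{Attractivity.} Let $v(t)=V(\bx(t))$, so $\dot v\le-\alpha(v)$ with $v\ge0$. The function $v$ is nonincreasing and tends to some limit $v_\infty\ge0$. If $v_\infty>0$, then $\dot v\le-\alpha(v_\infty)<0$ for all $t$, which forces $v\to-\infty$, a contradiction. So $v_\infty=0$, and then $\alpha_1(\|\bx(t)-\bx^\star\|)\le v(t)\to0$ gives $\bx(t)\to\bx^\star$.

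\emph{Main obstacle.} The delicate step is the invariance of the sublevel set. The CLF bounds and the decrease condition \eqref{eq:lyap_cond} hold only on $B_r(\bx^\star)$, and $V$ need not be radially unbounded, so the sublevel set has to be restricted to the ball. The fix is to work with a level $c'<\alpha_1(\rho)$, which makes the set strictly interior to the ball, and to argue by contradiction at the first exit time, using continuity of trajectories and non-increase of $V$ up to that time. The statement's phrasing of where \eqref{eq:lyap_cond} holds is slightly garbled; I would read it as holding on $B_r(\bx^\star)\setminus\{\bx^\star\}$, which is all the proof needs.
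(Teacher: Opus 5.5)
Your proof is correct. The paper gives no proof of this lemma and treats it as a standard background fact from Lyapunov theory; your argument is exactly that standard direct-method proof: a sublevel set truncated to $\overline{B_\rho(\bx^\star)}$ with level below $\alpha_1(\rho)$ for invariance, the $\alpha_1$/$\alpha_2$ sandwich for $\epsilon$--$\delta$ stability, and $\dot v\le-\alpha(v)$ for convergence. When the paper reuses this reasoning in Theorem~\ref{thm:combo_CLF_CBF}, it gets convergence from the comparison lemma instead of your monotone-limit step, which is equivalent. The one point you leave implicit is that the decrease inequality also holds at $\bx^\star$ itself: $\bk(\bx^\star)=\bu^\star$ makes the closed-loop field vanish there, giving $\dot V=0=-\alpha(0)$, so $V$ is nonincreasing along every solution inside $B_r(\bx^\star)$, including non-unique ones through $\bx^\star$.
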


Lyapunov methods do not only certify stability, but also provide a characterization of the region of attraction. 
\begin{lemma}
\label{lem:CLFinvar}
Consider the same setting as in Lemma~\ref{lem:stab}. For $c>0$, define the sublevel set $\Omega_c:=\left\{\bx\in\real^n \mid V(\bx)\leq c\right\}$. Then for any $c >0$ such that $\Omega_c\subset B_r(\bx^\star)$, the state feedback $\bu=\bk(\bx)$ renders the set $\Omega_c$ forward invariant under the closed-loop dynamics. Moreover, $\Omega_c$ is a certified inner approximation of the region of attraction of $\bx^\star$.~\hfill$\blacksquare$
\end{lemma}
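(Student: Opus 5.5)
The argument is a standard comparison plus sublevel-set invariance argument, specialized to the local setting where the CLF inequality~\eqref{eq:lyap_cond} is only available on $B_r(\bx^\star)\setminus\{\bx^\star\}$. Fix $c>0$ with $\Omega_c\subset B_r(\bx^\star)$. First I will show that $\Omega_c$ is compact. It is closed, being the preimage of $(-\infty,c]$ under the continuous $V$, and it is bounded because it lies inside the bounded ball $B_r(\bx^\star)$. Since $\bf$, $\bg$, and $\bk$ are continuous, the closed-loop vector field $\bf(\bx)+\bg(\bx)\bk(\bx)$ is continuous. Peano's theorem then gives local existence of solutions from any $\bx_0\in\Omega_c$, but not uniqueness. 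I will therefore argue for every solution, or else assume local Lipschitzness of $\bk$ as is implicit in Lemma~\ref{lem:stab}.

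\textbf{Forward invariance.} Let $\bx(t)$ be a solution with $\bx(0)\in\Omega_c$, defined on a maximal interval $[0,T)$. Suppose, for contradiction, that there is a $t_1$ with $V(\bx(t_1))>c$. Let $t_0:=\sup\{t\le t_1 : V(\bx(t))\le c\}$. By continuity $V(\bx(t_0))=c$, and $V(\bx(t))>c$ on $(t_0,t_1]$. Because $\Omega_c\subset B_r(\bx^\star)$ and $B_r(\bx^\star)$ is open, $\bx(t)$ stays in $B_r(\bx^\star)$ on some interval $[t_0,t_0+\delta]$. On that interval $V(\bx(t))>0$, so $\bx(t)\neq\bx^\star$, and~\eqref{eq:lyap_cond} gives
\[
\tfrac{d}{dt}V(\bx(t))\le -\alpha(V(\bx(t)))<0 .
\]
Hence $V(\bx(t))<c$ just after $t_0$, which contradicts the choice of $t_0$. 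So every solution stays in $\Omega_c$ as long as it exists. Since $\Omega_c$ is compact, solutions cannot escape in finite time, and therefore $T=\infty$. The main subtlety is exactly this boundary step. It needs openness of the ball, so that the trajectory cannot leave $B_r(\bx^\star)$ instantaneously, and strict decrease at the level $c>0$. Both hold here.

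\textbf{Convergence.} For $t\ge0$ the trajectory lies in $\Omega_c\subset B_r(\bx^\star)$. So $v(t):=V(\bx(t))$ satisfies $\dot v\le-\alpha(v)$ whenever $v>0$, and $v$ is nonincreasing. By the comparison lemma, $v(t)\le\beta(v(0),t)$ for a class-$\mathcal{KL}$ function $\beta$, so $v(t)\to0$. A more elementary route also works: if $v(t)\ge\epsilon>0$ for all $t$, then $\dot v\le-\alpha(\epsilon)$ for all $t$, and $v$ would become negative, which is impossible. The lower bound $\alpha_1(\|\bx-\bx^\star\|)\le V(\bx)$ on $B_r(\bx^\star)$ then gives
\[
\|\bx(t)-\bx^\star\|\le\alpha_1^{-1}(v(t))\to0 .
\]
Lyapunov stability of $\bx^\star$ itself comes from Lemma~\ref{lem:stab}. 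Together these show that every initial condition in $\Omega_c$ is attracted to the asymptotically stable equilibrium $\bx^\star$. This means $\Omega_c$ is contained in the region of attraction, i.e., it is a certified inner approximation of it.
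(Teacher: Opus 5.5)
Your proof is correct. The paper gives no proof of this lemma to compare against: it is quoted as a standard background fact, the $\blacksquare$ simply closing the statement. The only related reasoning in the paper is the comparison-lemma step in the proof of Theorem~\ref{thm:combo_CLF_CBF}, and your convergence part mirrors it. The rest of your argument fills in details the paper leaves implicit: the first-exit-time argument for invariance, the use of compactness of $\Omega_c$ to rule out finite escape, and the fact that it holds for every Peano solution.

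Two small points could be tightened:
\begin{itemize}
\item Your claim that $v$ is nonincreasing must also cover instants where $v(t)=0$, i.e.\ $\bx(t)=\bx^\star$. This holds because $\bx^\star$ minimizes the $C^1$ function $V\ge 0$, so $\nabla V(\bx^\star)=0$ and $\dot v=0$ there.
\item The usual comparison lemma that yields a class-$\mathcal{KL}$ bound assumes $\alpha$ is locally Lipschitz. With $\alpha$ merely class-$\mathcal K$, your elementary route (together with monotonicity of $v$) is the cleaner way to get $v(t)\to 0$.
\end{itemize}
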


CLFs facilitate the design of a continuous feedback controller. In particular,
optimization-based approaches have been widely adopted for the controller synthesis, formulating the controller as a quadratic program (QP):
\begin{align}
    \label{eq:CLF_QP}
        \bk(\bx) =\argmin_{\bu \in \mathbb{R}^m} \quad
            & \|\bu - \bk_{\mathrm{nom}}(\bx)\|^2 \\
        \mathrm{s.t.} \quad
            & \dot{V}(\bx,\bu) \;\leq\; -\alpha(V(\bx)),\notag 
\end{align}
where $\bk_{\mathrm{nom}}$ is a nominal controller \cite{Ames19_CBFs}. The main benefit of this formulation is that it naturally accommodates additional constraints such as those encoding safety requirements.

\subsection{Control barrier functions} \label{section:CBFsTheory} 
While CLFs guarantee convergence to an equilibrium, control barrier functions guarantee the state remains within a safe set $\Sc\subseteq\real^n$. Since rendering all of $\Sc$ forward invariant is generally impossible, one can instead identify a subset $\Cc\subseteq \Sc$ in which the system can be initialized and operated safely, i.e., trajectories starting in $\Cc$ remain in $\Cc$ for all time.

To this end, we construct a candidate safe set as the $0$-superlevel set of a $C^1$ function $h:\mathbb{R}^n\to\mathbb{R}$, namely:
\begin{align}
     \mathcal{C} &= \{\bx\in\mathbb{R}^n \mid h(\bx)\ge 0\}.  \label{eqn:C}
\end{align}
To ensure safety, we choose a set $\Cc\subseteq\Sc$ and require it to be forward invariant under the closed-loop dynamics. This can be achieved by imposing an affine constraint on the evolution of $h$, motivating the notion of a control barrier function.
\begin{definition}[Control Barrier Function~{\cite{Ames19_CBFs,Ong25_combinatorialCBF}}]
\label{def:CBF}
A continuously differentiable function $h:\mathbb{R}^n\to\mathbb{R}$ is a \emph{control barrier function (CBF)} for~\eqref{sys:ctrl_affine} on $\mathcal{C}$ if there exists $\alpha\in\mathcal{K}_e$ such that:
\begin{equation}
\sup_{\bu\in\mathbb{R}^m}(\dot h(\bx,\bu))
> -\alpha(h(\bx))
\label{eq:cbf_def}
\end{equation}
for all $\bx\in\mathcal{C}$, where $\dot h(\bx,\bu)\triangleq L_\bf h(\bx)+ L_\bg h(\bx)\bu$.~\hfill$\diamond$
\end{definition}
This definition parallels that of a CLF, with the condition on $\dot h$ playing a role on ensuring the existence of a safeguarding controller, as formalized in the following lemma. 
\begin{lemma}\label{lem:CBFinvar}
Let $h:\mathbb{R}^n\to\mathbb{R}$ be a continuously differentiable function defining the $0$-superlevel set of a set $\mathcal{C}$ as in \eqref{eqn:C}, and let $\bk:\mathbb{R}^n\to\mathbb{R}^m$ be a continuous controller satisfying:
\begin{equation}\label{eq:cbf_cond}
    \dot h(\bx, \bk(\bx)) \geq -\alpha(h(\bx)),
\end{equation}
for all $\bx$ in a neighborhood $\Dc$ of $\Cc$. Then the state feedback $\bu=\bk(\bx)$ renders the set $\mathcal{C}$ forward invariant. \hfill$\blacksquare$
\end{lemma}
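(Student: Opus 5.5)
The plan is a Nagumo-type comparison argument applied along closed-loop solutions. Continuity of $\bf$, $\bg$ and $\bk$ makes the closed-loop vector field $F(\bx):=\bf(\bx)+\bg(\bx)\bk(\bx)$ continuous. By Peano's theorem, every initial condition $\bx_0\in\Cc$ therefore admits at least one solution $\bx(t)$ on a maximal interval $[0,T)$. To show $\Cc$ is forward invariant, we show that every such solution satisfies $h(\bx(t))\ge 0$ on $[0,T)$. We argue by contradiction: suppose some solution starting in $\Cc$ has $h(\bx(t_1))<0$ for some $t_1\in(0,T)$.

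\textbf{Step 1 (localize near the boundary).} Let $t_0:=\sup\{t\in[0,t_1]\mid h(\bx(t))\ge 0\}$. By continuity, $h(\bx(t_0))=0$ and $t_0<t_1$. Since $\Dc$ is a neighborhood of $\Cc$ and $\bx(t_0)\in\Cc$, continuity of the trajectory gives some $\delta>0$ such that $\bx(t)\in\Dc$ for $t\in[t_0,t_0+\delta]$. Shrinking $\delta$, we may also assume $h(\bx(t))<0$ on $(t_0,t_0+\delta]$ and that $h(\bx(t))$ stays in the domain of $\alpha$. On this interval, \eqref{eq:cbf_cond} holds along the trajectory, so $y(t):=h(\bx(t))$, which is $C^1$ because $h\in C^1$ and $\bx$ is $C^1$, satisfies
\begin{equation*}
\dot y(t)=\dot h(\bx(t),\bk(\bx(t)))\ge -\alpha(y(t)),\qquad y(t_0)=0 .
\end{equation*}

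\textbf{Step 2 (comparison).} On $(t_0,t_0+\delta]$ we have $y<0$, so $-\alpha(y)>0$ because $\alpha\in\mathcal K_e$ is strictly increasing with $\alpha(0)=0$. Hence $\dot y(t)>0$ there. Thus $y$ is nondecreasing on $[t_0,t_0+\delta]$, which forces $y(t)\ge y(t_0)=0$ on that interval. This contradicts $y<0$ on $(t_0,t_0+\delta]$, so no such $t_1$ exists and $\bx(t)\in\Cc$ for all $t\in[0,T)$.

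The only step that needs care is the choice of $t_0$ and $\delta$ in Step 1. We must take $t_0$ as the last time the trajectory is in $\Cc$ before $t_1$, so that the trajectory is strictly outside $\Cc$ immediately afterward. We must also make sure it is still inside $\Dc$ there, since \eqref{eq:cbf_cond} is only assumed on $\Dc$. This is exactly why the lemma requires the inequality on a neighborhood of $\Cc$ rather than on $\Cc$ alone: on $\Cc$ alone, the sign argument would have nothing to act on just outside the boundary. The argument never uses uniqueness of solutions. It applies to every Carathéodory/Peano solution, so forward invariance holds in the strong sense even though $\bk$ is only continuous.
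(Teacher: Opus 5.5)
The paper gives no proof of this lemma: it is quoted as a standard background fact from the CBF literature (the $\blacksquare$ follows the statement directly), so there is no in-paper argument to match. Your proof is correct as written.

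\begin{itemize}
\item The last-exit time $t_0$ is well defined because $\{t\in[0,t_1]\mid h(\bx(t))\ge 0\}$ is closed and contains $0$.
\item The point $\bx(t_0)$ lies in the interior of $\Dc$, because $\Dc$ is a neighborhood of $\Cc$.
\item For $t\in(t_0,t_0+\delta]$, the mean value theorem gives $y(t)=\dot y(\xi)(t-t_0)$ with $\dot y(\xi)\ge-\alpha(y(\xi))>0$. This contradicts $y(t)<0$.
\end{itemize}

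The route more commonly taken in the references the paper leans on is either Nagumo's or Brezis' theorem applied on $\partial\Cc$, or the comparison lemma bounding $h(\bx(t))$ from below by a solution of $\dot y=-\alpha(y)$. That route usually assumes local Lipschitz continuity of $\bk$ and $\alpha$, so that closed-loop or comparison solutions are unique.

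Your first-exit argument instead uses the neighborhood hypothesis to get the strict inequality $\dot h>0$ just outside $\Cc$. That alone rules out escape for every Peano solution, with no uniqueness or Lipschitz assumption. This is exactly the generality the paper needs: its QP controllers are only shown to be continuous, via the pointwise Slater condition in Proposition~\ref{prop:slater}, not Lipschitz.

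What the comparison-lemma route buys in addition is a quantitative bound on $h$ along trajectories, which invariance does not require.
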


In practice, such a condition is often enforced via a QP-based safety filter of the form:
\begin{align}
    \label{eq:cbf_qp}
        \bk(\bx) = \argmin_{\bu \in \mathbb{R}^m} \quad
            & \tfrac{1}{2}\|\bu-\bk_{\mathrm{nom}}(\bx)\|^2 \\
        \text{s.t.}\quad
            & \dot{h}(\bx,\bu) \ge -\alpha(h(\bx)). \notag
\end{align}
Under the use of a strict inequality in Definition~\ref{def:CBF}, the CBF-QP is continuous and is well-defined on a neighborhood of $\Cc$ as required. This formulation is particularly useful when multiple safety requirements, and potentially a CLF condition, must be enforced simultaneously, leading to multiple affine constraints in the optimization.

\subsection{Combinatorial control barrier functions}

Given $p$ CBFs~$\{h_j\}_{j\in\until{p}}$, appending the associated CBF constraints to the CBF-QP enforces their simultaneous satisfaction, corresponding to an AND composition of safety requirements. Defining the corresponding $0$-superlevel sets as
$\mathcal{C}_j=\{\bx\in\mathbb{R}^n\mid h_j(\bx)\ge 0\}$ for all $j\in\until p$, the CBF-QP enforces forward invariance of their intersection:
\begin{equation}
     \textstyle\bigcap_{j=1}^p \mathcal{C}_j = \setdefb{\bx\in\real^n}{\min_{j\in[p]} h_j(\bx)\ge 0}
\end{equation}

More generally, one may wish to only require that at least $r$ of the $p$ functions $\{h_j(\bx)\}_{j\in\until p}$ are nonnegative at any given time. The corresponding combinatorial safe set is
\begin{equation}
    \tilde{\mathcal{C}}=\{\bx\in\mathbb{R}^n\mid \tilde h(\bx)\ge 0\}.
    \label{eqn:CkOne}
\end{equation}
where $\map{\tilde h}{\real^n}{\real}$ is the pivot function, defined as:
\begin{equation}
\tilde h(\bx) := \textstyle\max^{(r)} \{h_j(\bx)\}_{j\in\until p}
\label{eqn:sorting}
\end{equation}
and $\max^{(r)}_{j\in[p]}$ returns the $r$-th largest value among the collection $\{h_j(\bx)\}_{j\in\until p}$. 
The AND and OR cases can be recovered with $r\!=\!p$ and $r\!=\!1$, respectively. However, the resulting function $\tilde h$ is generally nonsmooth, which may lead to discontinuities in the corresponding control laws \cite{Ong25_combinatorialCBF}.

To handle such logical compositions, we rely on the combinatorial CBF framework by~\cite{Ong25_combinatorialCBF,Ong26_ComboBackupCBF}. In particular, $\tilde \Cc$ can be rendered forward invariant using the following result.
\begin{lemma}\label{lem:combo_CBF}
    Consider the set $\tilde \Cc$ constructed from multiple $C^1$ functions $\{h_j\}_{j\in\until p}$ as in~\eqref{eqn:CkOne}.
    Suppose there exist a continuous controller $\bk:\mathbb{R}^n\!\to\!\mathbb{R}^m$  and a nonnegative auxiliary function $\map{\theta}{\real^n}{\realnonneg}$ such that:
\begin{equation}
\dot{h}_j(\bx, \bk(\bx)) \geq -\alpha\bigl(h_j(\bx)\bigr)
- \theta(\bx)\rho\bigl(h_j(\bx) - \tilde{h}(\bx)\bigr),
\label{eq:combo_cbf_condition}
\end{equation}
for all $j\!\in\!\until{p}$ and all $\bx$ in a neighborhood $\Dc$ of $\tilde\Cc$, with  a positive definite function $\map{\rho}{\!\real\!}{\!\realnonneg\!}$ and $\alpha\in\Kc_e$. Then the controller $\bk$ renders set $\tilde \Cc$ forward invariant. \hfill$\blacksquare$
\end{lemma}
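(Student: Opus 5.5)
The idea is to show that the pivot function $\tilde h(\bx(t))$ cannot cross zero from above along closed-loop trajectories. Since $\tilde h$ is only continuous, not $C^1$, I will argue through the individual $h_j$: the penalty term switches off exactly for those $j$ that realize the pivot. Fix an initial condition $\bx(0)\in\tilde\Cc$ and let $\bx(t)$ be a closed-loop solution. Solutions exist because $\bk$, $\bf$, $\bg$ are continuous; uniqueness is not needed if the argument holds for every solution. Suppose, for contradiction, that $\bx(t_1)\notin\tilde\Cc$ for some $t_1>0$. Define
\begin{equation*}
t_0:=\sup\{t\in[0,t_1] \mid \tilde h(\bx(t))\ge 0\}.
\end{equation*}
By continuity of $\tilde h$ (a continuous order statistic of continuous functions), $\tilde h(\bx(t_0))=0$, and there are times arbitrarily close after $t_0$ with $\tilde h<0$. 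Taking $t_1$ close to $t_0$, the trajectory on $[t_0,t_1]$ stays in the neighborhood $\Dc$, so \eqref{eq:combo_cbf_condition} holds there.

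The key step is a comparison argument on a suitable index set. Let $J:=\{j\in\until p \mid h_j(\bx(t_0))\ge 0\}$, which has at least $r$ elements since $\tilde h(\bx(t_0))=0$. For $j\in J$ with $h_j(\bx(t_0))>0$, continuity keeps $h_j>0$ on a short interval, so these indices cause no trouble. The delicate indices are those with $h_j(\bx(t_0))=0$. Any such $j$ realizes the $r$-th largest value at $t_0$, so the penalty $\rho(h_j-\tilde h)$ vanishes at $t_0$ but not necessarily afterwards. Hence I cannot simply drop the $\theta\rho$ term and apply Lemma~\ref{lem:CBFinvar} index by index.

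Instead I would work with the function $m(t):=\min_{j\in S}h_j(\bx(t))$ for a cleverly chosen set $S$ of size $r$, or directly with $\tilde h(\bx(t))$, using its upper Dini derivative. At almost every $t$ (or at any $t$) the Dini derivative of $\tilde h$ is bounded below by $\dot h_j$ for some index $j$ attaining $\tilde h(\bx(t))=h_j(\bx(t))$. For that $j$ the penalty vanishes because $\rho(0)=0$, so \eqref{eq:combo_cbf_condition} gives
\begin{equation*}
D^+\tilde h(\bx(t))\ge -\alpha(\tilde h(\bx(t))).
\end{equation*}
Concretely, I will prove the lower-Dini inequality $D_+\tilde h\ge\min_{j\in A(t)}\dot h_j$, where $A(t)$ is the set of indices active for the pivot. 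This follows because $\tilde h$ is locally a max–min selection among the $h_j$ near ties. Since each active $\dot h_j\ge-\alpha(\tilde h)$, the comparison lemma for Dini derivatives, with the scalar system $\dot y=-\alpha(y)$, $y(t_0)=0$ and unique solution $y\equiv 0$ (assuming $\alpha$ locally Lipschitz, or using a maximal-solution argument), yields $\tilde h(\bx(t))\ge 0$ on $[t_0,t_0+\epsilon)$. This contradicts the definition of $t_0$.

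I expect the main obstacle to be this nonsmooth step: justifying the directional derivative bound $D_+\tilde h(\bx(t))\ge \min_{j\in A(t)}\dot h_j(\bx(t),\bk(\bx(t)))$ when several $h_j$ tie at the pivot value. The approach I would try first is to write $\tilde h=\max_{|S|=r}\min_{j\in S}h_j$. Then, since $\tilde h\ge\min_{j\in S^\ast}h_j$ for the fixed maximizing $S^\ast$ at $t_0$, the lower bound follows from Danskin-type one-sided derivatives of a finite min of $C^1$ functions. Choosing $S^\ast$ to consist of indices with $h_j\ge\tilde h$ ensures every active index in $S^\ast$ has $h_j=\tilde h$, so the $\theta\rho$ term drops. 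For the comparison step, the only concern is non-uniqueness when $\alpha$ is merely continuous. I would handle it by a strict version: replace $\alpha$ with $\alpha-\epsilon$ and pass to the limit.
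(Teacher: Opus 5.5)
\textbf{Assessment.} The paper gives no proof of this lemma. It is quoted as background from the combinatorial CBF work \cite{Ong25_combinatorialCBF,Ong26_ComboBackupCBF}, so there is no in-paper argument to compare with. Judged on its own, your argument is correct.

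Along a solution, each $\phi_j(t):=h_j(\bx(t))$ is $C^1$ with $\dot\phi_j(t)=\dot h_j(\bx(t),\bk(\bx(t)))$. Fix any $t$ with $\bx(t)\in\Dc$, and take $S^\ast(t)$ to be $r$ indices with $\phi_j(t)\ge\tilde h(\bx(t))$. Then $\min_{j\in S^\ast(t)}\phi_j\le\tilde h(\bx(\cdot))$ is a minorant that touches at $t$. Hence
$D_+\tilde h(\bx(t))\ge\min\{\dot\phi_j(t)\mid j\in S^\ast(t),\ \phi_j(t)=\tilde h(\bx(t))\}\ge-\alpha(\tilde h(\bx(t)))$,
because $\rho(0)=0$ removes the $\theta\rho$ term for exactly these indices, whatever the value of $\theta$. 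This is the key step. It uses only $\rho(0)=0$ and no regularity or boundedness of $\theta$; positive definiteness of $\rho$ matters for feasibility of the QP, not for invariance.

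\textbf{Two things to tighten.}

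First, $S^\ast$ must be re-chosen at every $t\in[t_0,t_1]$. Your last paragraph speaks of ``the fixed maximizing $S^\ast$ at $t_0$''. A one-sided derivative bound at the single instant $t_0$ does not prevent crossing; think of $-(t-t_0)^2$. Your earlier formulation with $A(t)$ indicates you intend the bound for all $t$, so just make the pointwise re-selection explicit.

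Second, the non-Lipschitz concern is unnecessary, and the fix you sketch has the wrong sign. From $D_+y\ge-\alpha(y)$ you get $D_+y>-\alpha(y)-\epsilon$, so the strict comparison system would be $\dot w=-\alpha(w)-\epsilon$ (i.e.\ $\alpha+\epsilon$), not $\alpha-\epsilon$. The direct route is shorter:
\begin{itemize}
\item On $(t_0,t_1]$ you have $\tilde h(\bx(t))<0$, so $D_+\tilde h(\bx(t))\ge-\alpha(\tilde h(\bx(t)))>0$.
\item A continuous function with positive right Dini derivative on $(t_0,t_1)$ is nondecreasing on $[t_0,t_1]$.
\item Hence $\tilde h(\bx(t_1))\ge\tilde h(\bx(t_0))=0$, contradicting the choice of $t_1$.
\end{itemize}
Equivalently, $\dot w=-\alpha(w)$ has forward-unique solutions because $-\alpha$ is decreasing. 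So $w\equiv 0$ is the minimal solution from $w(t_0)=0$ with no Lipschitz assumption on $\alpha$.
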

The controller $\bk$ and the auxiliary function $\theta$ can be simultaneously obtained by solving the following QP:
\begin{align}
    \label{eq:combocbf_qp}
        &{\small\begin{bmatrix}
            \bk(\bx)\\\theta(\bx)
        \end{bmatrix}} = \argmin_{\bu \in \mathbb{R}^m,~\omega\in\realnonneg} \;
             \tfrac{1}{2}\|\bu-\bk_{\mathrm{nom}}(\bx)\|^2 +c_\omega\omega^2 \\
        &\text{s.t.}\,\,
            \dot h_j(\bx,\bu) \ge \!-\alpha(h_j(\bx)) -\omega\rho\bigl(h_j(\bx) - \tilde{h}(\bx)\bigr),~\forall j\!\in\!\until p.\nonumber
\end{align}
The additional term involving $\rho$ introduces flexibility in the barrier condition, allowing some of the functions $h_j$ to become negative while still ensuring that at least $r$ of them remain nonnegative. The auxiliary variable $\omega$ scales this relaxation, enabling the controller to disregard noncritical constraints when necessary to maintain feasibility.

\section{Combinatorial Stabilization} \label{section:proposed}

Motivated by autonomy applications in which fault tolerance and contingency planning are essential, we introduce a control objective that extends classical notions of stabilization and backward reachability. In such settings, it is not sufficient to ensure that the system can be steered to a single target. Instead, the system must also remain controllable to multiple candidate targets throughout its trajectory, ensuring that the loss of any one option
does not compromise mission success and providing the system with contingency options for unexpected condition changes. For example, in spacecraft or drone landing, the system must retain the ability to divert to alternative landing sites in response to newly detected hazards. Similarly, in autonomous exploration, the system must remain controllable to charging stations or safe zones. 

In this section, we focus on stabilization and assume that a collection of stabilizing controllers $\{\bk_j\}_{j\in[p]}$ is given, each associated with a distinct equilibrium $\{\bx_j^\star\}_{j\in[p]}$. For each $j \in [p]$, let $\Rc_j$ be the certified subset of region of attraction under $\bk_j$. Our objective is to ensure \emph{combinatorial stabilization}, i.e., stabilization while remaining within a \emph{combinatorial region of attraction}. We formalize these notions as follows.
\begin{definition}
    Given a set of equilibria $\{\bx^\star_j\}_{j\in\until p}$ under controllers $\{\bk_j\}_{j\in[p]}$, let $\{\Rc_j\}_{j\in[p]}$ be subsets of their corresponding regions of attraction. With $r\in\until p$, the set:
\begin{equation}\label{eq:combo_RoA}
\tilde{\Rc}:=\setdefB{\bx \in \real^n}{\big|\setdef{j\in\until p}{\bx\in \Rc_j}\big|\geq r}
\end{equation} 
is called a \emph{($r$-out-of-$p$) combinatorial region of attraction}.\hfill$\diamond$
\end{definition}

\begin{definition}
    A controller $\map{\!\bk\!}{\!\real^n\!}{\!\real^m\!}$ \emph{($r$-out-of-$p$) combinatorially stabilizes} an equilibrium $\bx^\star_{j^\dagger}\!\in\!\{\bx_j^\star\}_{j\in\until p}$ if, under the feedback $\bu\!=\!\bk(\bx)$, the set $\tilde\Rc$ is forward invariant and $\bx^\star_{j^\dagger}$ is asymptotically stable with $\tilde\Rc\! \cap\!\Rc_{j^\dagger}$ contained in its region of attraction.\hfill$\diamond$
\end{definition}

\subsection{Combinatorial Stabilization via CLFs}\label{section:CLF}
We now show how combinatorial stabilization above can be achieved using control Lyapunov functions. Consider the control affine system in~\eqref{sys:ctrl_affine} with multiple controlled equilibria $\{\bx^\star_j\}_{j\in\until p}$ of interest. We assume that for each equilibrium $\bx_j^\star$, there exists a local control Lyapunov function $\map{V_j}{\real^n}{\realnonneg}$ for~\eqref{sys:ctrl_affine} about $\bx^\star_j$. As discussed in Lemma~\ref{lem:CLFinvar}, for each equilibrium $\bx^\star_j$, there exists $c_j>0$ such that the sublevel set 
$$
\Rc_j=\setdefb{\bx\in\real^n}{V_j(\bx)\leq c_j}
$$
is a certified inner approximation of the region of attraction of $\bx_j^\star$ (under any controller $\bk_j$ satisfying~\eqref{eq:lyap_cond}).  Within this setup, our objective reduces to ensuring forward invariance  of the combinatorial region of attraction $\tilde{\Rc}$ constructed from $\{\Rc_j\}_{j\in\until p}$ in~\eqref{eq:combo_RoA} while stabilizing to a chosen equilibrium. 

To achieve this, we construct CBFs from the given CLFs and combine them using the combinatorial CBF framework. By defining the CBFs:
\begin{equation}
h_j(\bx) := c_j - V_j(\bx), \qquad j \in [p],
\label{eq:hk_def}
\end{equation}
each $\Rc_j$ is the $0$-superlevel set of $h_j$. The combinatorial region of attraction can then be characterized as:
\begin{equation}\label{eq:combo_RoA_from_CBFs}
   \tilde \Rc = \setdefb{\bx\in\real^n}{\tilde h(\bx)\geq 0}. 
\end{equation}
where $\tilde h(\bx)=\max^{(r)} \{h_j(\bx)\}_{j\in\until p}$. With this construction, we combine combinatorial CBF and CLF constructions to establish combinatorial stabilization.

A natural approach is to impose both the Lyapunov decrease condition~\eqref{eq:lyap_cond} for a designated equilibrium $\bx^\star_{j^\dagger}$ and the combinatorial barrier condition~\eqref{eq:combo_cbf_condition} simultaneously. The issue with such an approach is the lack of feasibility. In particular, recall that the local control Lyapunov function~\eqref{eq:clf_def} only guarantees admissible control inputs in a small neighborhood~$B_r(\bx^\star_{j^\dagger})$ of the equilibrium. As such, we propose an adjustment to the Lyapunov condition, to ensure its feasibility within all of $\tilde \Rc$, in the following result.

\begin{theorem}\label{thm:combo_CLF_CBF}
    Consider the control-affine system~\eqref{sys:ctrl_affine} with a collection of local CLFs $\{V_j\}_{j\in\until p}$ about corresponding equilibria $\{\bx_j^\star\}_{j\in\until p}$. Let the functions
$\{h_j\}_{j\in\until p}$ be defined as in~\eqref{eq:hk_def}, and $\tilde \Rc$ as in~\eqref{eq:combo_RoA}. Fix an index $j^\dagger\in\until p$. Suppose the equilibrium $\bx^\star_{j^\dagger}$ is in the interior of $\tilde \Rc \cap \Rc_{j^\dagger}$, and there exists a continuous controller $\map{\bk}{\real^n}{\real^m}$ and an auxiliary function $\map{\theta}{\real^n}{\realnonneg}$ such that:
\begin{subequations} \label{eq:combo_CLF_CBF}
    \begin{align} 
\!\!\!\dot V_{j^\dagger}(\bx,\bk(\bx)) &\!\leq\! -\alpha_{j^\dagger}(V_{j^\dagger}(\bx))\!+\! \textcolor{black}{\theta(\bx)\mathrm{ReLU}\bigl(\!-h_{j^\dagger}(\bx)\bigr)} \label{eq:combo_CLF_CBF-CLF}\\  \!\!\!
        \dot h_j(\bx,\bk(\bx)) &\!\geq\! -\alpha_{\tilde h}\bigl(h_j(\bx)\bigr)
  \!-\!\theta(\bx)\rho\bigl(h_j(\bx)\!-\!\tilde h(\bx)\bigr), \label{eq:combo_CLF_CBF-CBF}
    \end{align}
\end{subequations}
for all $j\!\in\!\until p$ and all $ \bx$ in a neighborhood $\Dc$ of $\tilde \Rc$, where 
$\alpha_{\tilde h}\!\in\!\Kc_e$ and $\map{\rho\!}{\!\real\!}{\!\realnonneg}$ is positive definite. Then, the controller $\bk$ combinatorially stabilizes the equilibrium $\bx^\star_{j^\dagger}$.
\end{theorem}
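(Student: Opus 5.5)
The proof splits into two parts: forward invariance of $\tilde\Rc$, and asymptotic stability of $\bx^\star_{j^\dagger}$ with $\tilde\Rc\cap\Rc_{j^\dagger}$ in its region of attraction.

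\emph{Forward invariance of $\tilde\Rc$.} Condition~\eqref{eq:combo_CLF_CBF-CBF} is exactly condition~\eqref{eq:combo_cbf_condition} of Lemma~\ref{lem:combo_CBF}. It holds for all $j\in\until p$ on the neighborhood $\Dc$ of $\tilde\Rc$, with $\alpha=\alpha_{\tilde h}$, a positive definite $\rho$, and a nonnegative $\theta$. By~\eqref{eq:combo_RoA_from_CBFs}, $\tilde\Rc$ is the $0$-superlevel set of $\tilde h=\max^{(r)}\{h_j\}$. Lemma~\ref{lem:combo_CBF} therefore gives forward invariance of $\tilde\Rc$ under $\bu=\bk(\bx)$ directly.

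\emph{Invariance of $\tilde\Rc\cap\Rc_{j^\dagger}$.} On $\Rc_{j^\dagger}$ we have $h_{j^\dagger}\ge 0$, so $\mathrm{ReLU}(-h_{j^\dagger})=0$. Then~\eqref{eq:combo_CLF_CBF-CLF} reduces to the standard decrease $\dot V_{j^\dagger}\le -\alpha_{j^\dagger}(V_{j^\dagger})\le 0$ there. I would argue invariance of $\Rc_{j^\dagger}$ by noting that the constraint on $h_{j^\dagger}$ is one of the combinatorial constraints. A cleaner route uses the CLF inequality directly: consider any trajectory starting in $\tilde\Rc\cap\Rc_{j^\dagger}$. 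Suppose it first exits $\Rc_{j^\dagger}$ at time $t_1$, so $V_{j^\dagger}(\bx(t_1))=c_{j^\dagger}$. By continuity, shortly after $t_1$ we have $V_{j^\dagger}>c_{j^\dagger}$, yet $V_{j^\dagger}$ was nonincreasing up to $t_1$. Since $\alpha_{j^\dagger}(c_{j^\dagger})>0$ and $\theta\,\mathrm{ReLU}(-h_{j^\dagger})\to 0$ continuously as $h_{j^\dagger}\to 0$ (assuming $\theta$ is locally bounded, which follows from continuity of the QP solution), $\dot V_{j^\dagger}<0$ in a neighborhood of the boundary $\{V_{j^\dagger}=c_{j^\dagger}\}$. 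This contradicts the exit. Combined with the previous step, $\tilde\Rc\cap\Rc_{j^\dagger}$ is forward invariant.

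\emph{Asymptotic stability and region of attraction.} On the invariant set $\tilde\Rc\cap\Rc_{j^\dagger}$ the inequality $\dot V_{j^\dagger}\le-\alpha_{j^\dagger}(V_{j^\dagger})$ holds. Since $\bx^\star_{j^\dagger}$ lies in the interior of $\tilde\Rc\cap\Rc_{j^\dagger}$, there is a small sublevel set $\Omega_\epsilon\subset B_r(\bx^\star_{j^\dagger})$ contained in this interior. On $\Omega_\epsilon$ the controller satisfies the hypotheses of Lemma~\ref{lem:stab} and Lemma~\ref{lem:CLFinvar}, which gives local asymptotic stability. Note that $\bk(\bx^\star_{j^\dagger})=\bu^\star$ follows from $\dot V_{j^\dagger}\le 0$ at the minimum, or is handled by continuity.

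For the attraction claim, take any initial condition in $\tilde\Rc\cap\Rc_{j^\dagger}$. Along the trajectory, $V_{j^\dagger}(\bx(t))$ obeys the comparison inequality $\dot V\le-\alpha_{j^\dagger}(V)$, so it converges to $0$ by the comparison lemma. The trajectory also stays in the compact set $\Rc_{j^\dagger}$; compactness should follow from the sublevel set lying in $B_r$, as in Lemma~\ref{lem:CLFinvar}. Convergence of $V_{j^\dagger}$ to $0$ then forces $\bx(t)\to\bx^\star_{j^\dagger}$, using the lower bound $\alpha_1$ near the equilibrium and positivity of $V_{j^\dagger}$ away from it.

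\emph{Expected obstacle.} The main difficulty is the boundary argument for invariance of $\Rc_{j^\dagger}$. The relaxation term $\theta\,\mathrm{ReLU}(-h_{j^\dagger})$ could permit increase of $V_{j^\dagger}$ outside $\Rc_{j^\dagger}$. I would handle this by using the continuity of $\theta$ together with the strict margin $\alpha_{j^\dagger}(c_{j^\dagger})>0$ at the boundary, as described in the second step.
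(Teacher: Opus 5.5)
Your proposal is correct and follows the paper's proof: Lemma~\ref{lem:combo_CBF} gives invariance of $\tilde\Rc$, the $\mathrm{ReLU}$ term vanishes on $\Rc_{j^\dagger}$, Lemma~\ref{lem:stab} applies on a neighborhood of $\bx^\star_{j^\dagger}$, and the comparison lemma gives attraction; you additionally spell out the invariance of $\tilde\Rc\cap\Rc_{j^\dagger}$ and the step $V_{j^\dagger}\to 0\Rightarrow \bx\to\bx^\star_{j^\dagger}$, which the paper only asserts. One correction to your boundary step: here $\theta$ is an arbitrary nonnegative function, not the QP output, so you cannot assume it is locally bounded, and you also do not need to, because at the exit point $\bx(t_1)\in\tilde\Rc\subset\Dc$ the $\mathrm{ReLU}$ term is zero, so $\dot V_{j^\dagger}(\bx(t_1),\bk(\bx(t_1)))\le-\alpha_{j^\dagger}(c_{j^\dagger})<0$, and continuity of $\bx\mapsto \dot V_{j^\dagger}(\bx,\bk(\bx))$ (from $V_{j^\dagger}\in C^1$ and continuity of $\bf,\bg,\bk$) already rules out the exit.
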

\begin{proof}
    Forward invariance of $\tilde \Rc$ follows directly from Lemma~\ref{lem:combo_CBF}. We then seek to show that $\bx^\star_{j^\dagger}$ is asymptotically stable with $\tilde \Rc\cap \Rc_{j^\dagger}$ contained in its region of attraction.

    If $\bx\!\in\!\tilde \Rc\cap \Rc_{j^\dagger}\subseteq\Rc_{j^\dagger}$, then $V_{j^\dagger}(\bx)\!\leq\! c_{j^\dagger}$ and the $\mathrm{ReLU}$ function in~\eqref{eq:combo_CLF_CBF-CLF} evaluates to zero. Thus, the Lyapunov decrease condition~\eqref{eq:lyap_cond} holds $\forall \bx \in \tilde \Rc\cap \Rc_{j^\dagger}$ making the set forward invariant. In particular, this holds on a neighborhood $\mathcal{N}\!\subseteq\! \tilde \Rc \cap \Rc_{j^\dagger}$ of the $\bx^\star_{j^\dagger}$, which exists by assumption. Asymptotic stability of the equilibrium then follows from~Lemma~\ref{lem:stab}. Next, we prove convergence to the equilibrium from any initial condition $\bx_0\!\in \!\tilde \Rc\!\cap\! \Rc_{j^\dagger}$. Let $t\!\mapsto \!\bx(t)$ be any solution to the closed-loop system, it satisfies: 
    $$
    \dot V_{j^\dagger}(\bx(t),\bk(\bx(t))) \leq -\alpha_{j^\dagger}(V_{j^\dagger}(\bx(t))). 
    $$
    From the comparison lemma, $V_{j^\dagger}(\bx(t))\!\rightarrow \!0$, implying that $\bx(t)\!\rightarrow\! \bx^*_{j^\dagger}$. Hence, $\tilde \Rc\cap \Rc_{j^\dagger}$ is contained in the region of attraction of $\bx^\star_{j^\dagger}$ under $\bk$.
\end{proof}

Theorem~\ref{thm:combo_CLF_CBF} provides conditions to achieve combinatorial stabilization by combining CLF and combinatorial CBF constraints. The relaxation term in~\eqref{eq:combo_CLF_CBF-CLF} activates only outside the certified region of attraction~$\Rc_{j^\dagger}$, where an admissible input satisfying the Lyapunov condition may not exist. Consequently, nominal CLF-based stabilization is preserved for trajectories initialized in $\tilde\Rc \cap \Rc_{j^\dagger}$. Moreover, this relaxation ensures that the controller remains well-defined over the entire combinatorial region $\tilde \Rc$, even when the selected equilibrium is not locally stabilizable. This lays the foundation for future robustness analyses against disturbances, model errors, or incorrect target selections.

\subsection{Combinatorially stabilizing controllers}
The constraint structure in~\eqref{eq:combo_CLF_CBF}  naturally leads to an optimization-based controller synthesis. In particular, we construct a QP-based \textbf{combinatorial stabilization filter}: 
\begin{align}
        &\!\!\!\!\!
        {\small\begin{bmatrix}
            \bk(\bx,\tau_1,\tau_2) \\
            \theta(\bx,\tau_1,\tau_2) 
        \end{bmatrix}}\!= \argmin_{\substack{\bu \in \mathbb{R}^m,\,\omega\in\realnonneg}}  \!\tfrac{1}{2}\|\bu\!-\!\bk_{\mathrm{nom}}(\bx)\|^2 \!+\!c_\omega \boldsymbol{\omega} ^2 \label{eq:combo_clf_cbf_qp}\\ 
        &
       \text{s.t.}\, \dot V_{j^\dagger}(\bx,\bu) \leq\! -\alpha_{j^\dagger}(V_{j^\dagger}(\bx))\!+\!\omega\textcolor{black}{\mathrm{ReLU}\bigl(-h_{j^\dagger}(\bx)\bigr)}\nonumber\\
            &\dot h_j(\bx,\bu) \ge\! -\alpha_{\tilde h}(h_j(\bx))\! -\omega\rho\bigl(h_j(\bx) - \tilde{h}(\bx)\bigr),~\forall j\in\until p,\nonumber
\end{align}
which satisfies the required constraints in~\eqref{eq:combo_CLF_CBF} by design. Nevertheless, continuity of the controller is desirable in general for well-posedness of the closed-loop system, e.g., to avoid chattering. This property is ensured for the QP~\eqref{eq:combo_clf_cbf_qp} under standard regularity conditions, as stated next.

\begin{proposition}\label{prop:slater}
    Consider the same setting as in Theorem~\ref{thm:combo_CLF_CBF}. Define the index set of critical CBFs as:
    $$
    \Jc(\bx) = \setdefb{j\in\until p}{h_j(\bx)=\tilde h(\bx)}.
    $$
    Suppose the following hold:
    \begin{enumerate}
        \item For each $\bx\in\tilde\Rc\cap\Rc_{j^\dagger}$, there exists $\bu\in\real^m$ such that:
        \begin{align*}
        \dot V_{j^\dagger}(\bx,\bu) &< -\alpha_{j^\dagger}(V_{j^\dagger}(\bx)) \nonumber\\
            \dot h_j(\bx,\bu) &> -\alpha_{\tilde h}(h_j(\bx)),~\forall j\in\Jc(\bx);
    \end{align*}
    \item For each $\bx\in \tilde\Rc\setminus\Rc_{j^\dagger}$, there exists $\bu\in\real^m$ such that:
        \begin{align*}
            \dot h_j(\bx,\bu) &> -\alpha_{\tilde h}(h_j(\bx)),~\forall j\in\Jc(\bx).
    \end{align*}
    \end{enumerate}
    Then the combinatorially stabilizing QP~controller $\bk$ defined in~\eqref{eq:combo_clf_cbf_qp} is continuous on a neighborhood $\Dc$ of $\tilde\Rc$. Consequently, the controller $\bk$ combinatorially stabilizes $\bx^\star_{j^\dagger}$. 
\end{proposition}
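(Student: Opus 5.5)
The plan is to apply the standard continuity result for parametric convex QPs. Write the decision variable as $\bz=(\bu,\omega)$ and view \eqref{eq:combo_clf_cbf_qp} as a strictly convex QP whose cost and constraint data depend continuously on $\bx$. The cost is $\tfrac12\|\bu-\bk_{\mathrm{nom}}(\bx)\|^2+c_\omega\omega^2$; it is strongly convex in $\bz$ with a fixed Hessian, and its linear term is continuous when $\bk_{\mathrm{nom}}$ is. The constraints are affine in $\bz$. Their coefficients are $L_\bf V_{j^\dagger}$, $L_\bg V_{j^\dagger}$, $L_\bf h_j$, $L_\bg h_j$, $\alpha_{j^\dagger}(V_{j^\dagger})$, $\alpha_{\tilde h}(h_j)$, $\mathrm{ReLU}(-h_{j^\dagger})$ and $\rho(h_j-\tilde h)$. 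All of these are continuous in $\bx$, since $V_j\in C^1$, $\bf,\bg$ are continuous, $\tilde h=\max^{(r)}$ is continuous, and ReLU is continuous; we also take $\rho$ continuous. By the classical result (e.g.\ Robinson / Hager-type Lipschitz or continuity results for QPs, or Berge's maximum theorem plus uniqueness), the unique minimizer is continuous at every $\bx$ where the feasible set satisfies Slater's condition. So the argument reduces to showing that strict feasibility holds at every $\bx\in\tilde\Rc$. Openness of strict feasibility under continuous data then extends it to a neighborhood $\Dc$ of $\tilde\Rc$.

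The key step is constructing a strictly feasible pair $(\bu,\omega)$ at each $\bx\in\tilde\Rc$. Take $\bu$ from hypothesis 1) or 2), and note the structure of the constraints for $j\notin\Jc(\bx)$. There $h_j(\bx)\neq\tilde h(\bx)$, so $\rho(h_j-\tilde h)>0$ by positive definiteness, and choosing $\omega$ large makes those constraints strict for the fixed $\bu$. For $j\in\Jc(\bx)$ the $\omega$-term vanishes, and strictness is exactly the hypothesis.

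For the CLF constraint we split into two cases.
\begin{itemize}
\item If $\bx\in\tilde\Rc\cap\Rc_{j^\dagger}$, we have $h_{j^\dagger}\ge0$, so the ReLU term is zero and strictness comes from hypothesis 1).
\item If $\bx\in\tilde\Rc\setminus\Rc_{j^\dagger}$, then $h_{j^\dagger}(\bx)<0$, so $\mathrm{ReLU}(-h_{j^\dagger})>0$, and a large $\omega$ makes the CLF inequality strict for the $\bu$ from hypothesis 2).
\end{itemize}
Since only finitely many ($p+1$) constraints each impose a lower bound on $\omega$, a single $\omega>0$ works, so Slater's condition holds.

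The main obstacle is the boundary between the two regimes and the neighborhood extension. The Slater point must persist in a neighborhood of each $\bx\in\tilde\Rc$, including points slightly outside $\tilde\Rc$, where the index set $\Jc$ and the sign of $h_{j^\dagger}$ may change. I would handle this by the openness argument. Fix $\bx_0\in\tilde\Rc$ and the strict pair $(\bu_0,\omega_0)$ built above. Every constraint is then satisfied with strict inequality at $(\bx_0,\bu_0,\omega_0)$, and each constraint function is continuous in $\bx$ for fixed $(\bu_0,\omega_0)$. Hence the same pair is strictly feasible on an open ball around $\bx_0$. This needs no knowledge of $\Jc$ at nearby points, which sidesteps the discontinuity of $\Jc(\cdot)$. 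The union of these balls is the open neighborhood $\Dc$.

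On $\Dc$ the QP is then feasible, has a unique solution, and depends continuously on $\bx$, which gives continuity of $\bk$ (and $\theta$). Finally, $\bk$ is continuous, and $(\bk,\theta)$ satisfies \eqref{eq:combo_CLF_CBF} on $\Dc$ by construction of the constraints. So Theorem~\ref{thm:combo_CLF_CBF} applies, using the standing assumption that $\bx^\star_{j^\dagger}$ is interior to $\tilde\Rc\cap\Rc_{j^\dagger}$, and yields combinatorial stabilization.
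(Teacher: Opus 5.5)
Your proof is correct and follows essentially the same route as the paper. You build a Slater point at each $\bx\in\tilde\Rc$ from the hypothesized $\bu$ plus a large $\omega$ for the constraints whose $\rho$ or $\mathrm{ReLU}$ factor is positive, then extend strict feasibility to a neighborhood by continuity (freezing $(\bu_0,\omega_0)$ is just a more explicit version of the paper's one-line remark), and conclude via continuity of the parametric QP under Slater's condition.

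One caveat, which you share with the paper and which concerns the statement rather than your reasoning: since $V_{j^\dagger}\ge 0$ is $C^1$ with $V_{j^\dagger}(\bx^\star_{j^\dagger})=0$, we have $\nabla V_{j^\dagger}(\bx^\star_{j^\dagger})=0$. Hence the strict CLF inequality in hypothesis 1), and with it Slater's condition for the QP, cannot hold at the equilibrium. A non-vacuous version would have to exclude $\bx^\star_{j^\dagger}$ and establish continuity there by a separate argument.
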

\begin{proof}
    We aim to show that the QP~\eqref{eq:combo_clf_cbf_qp} satisfies a pointwise Slater's condition on a neighborhood $\Dc$ of $\tilde\Rc$. In particular, for each $\bx \in \Dc$, there exists $(\bu,\omega)\in\real^m\times \realnonneg$ such that all constraints are satisfied strictly. 

    Fix $\bx\in \tilde \Rc$. By assumption, there exists a control $\bu$ such that the constraints with $\rho(\cdot)= 0$ or $\mathrm{ReLU}(\cdot)= 0$ are met strictly. For the remaining constraints, for which $\rho(\cdot)>0$ or $\mathrm{ReLU}(\cdot)>0$, a sufficiently large $\omega$ can be selected to render these finitely many constraints strictly satisfied. Thus, there exists $(\bu,\omega)$ that strictly satisfies all constraints at $\bx$, establishing the pointwise Slater's condition on $\tilde \Rc$. In addition, as all functions involved in the constraints are continuous, strict feasibility persists on a neighborhood $\Dc$ of $\tilde\Rc$. Continuity of the QP controller~\eqref{eq:combo_clf_cbf_qp} then follows from satisfying Slater's condition at each $\bx\in\Dc$~\cite{PM-AA-JC:25}. 
\end{proof}
Proposition~\ref{prop:slater} characterizes the pointwise Slater's condition that provides the standard sufficient condition ensuring the continuity of the QP controller. In particular, due to the presence of the auxiliary variable $\omega$, strict feasibility of the constraints reduces to ensuring strict feasibility only for the critical constraints indexed by $\Jc(\bx)$.
\begin{remark}
    The conditions in Proposition~\ref{prop:slater} can be further simplified under the choice of $\alpha_{\tilde h}$. In particular, since the set $\tilde\Rc$ is compact (as it is defined through sublevel sets of Lyapunov functions), there exists a sufficiently large function $\alpha_{\tilde h}$ such that the strict feasibility condition need only be verified on the boundary of the set~$\tilde \Rc$.~\hfill$\bullet$
\end{remark}

\subsection{Discussion on results}
The combinatorial stabilization framework provides robustness at the planning level by ensuring that the closed-loop state remains within the region of attraction of at least $r$ equilibria  at all times. A key consequence of this property is the ability to switch between equilibria in real time. At each state $\bx\in\tilde\Rc$, one may choose to stabilize to any of the $r$ (or more) equilibria by appropriately adjusting $j^\dagger$. This enables contingency planning and reactive decision-making, allowing the system to adapt to changing conditions without compromising stability guarantees. Switching between targets may introduce discontinuities in the control input due to changes in the nominal objective. However, for any fixed choice of $j^\dagger$, the resulting controller is continuous. When switching occurs at discrete times, the resulting piecewise continuous control signal remains well-posed (assuming there are not infinite number of switches in finite time), with each switching instant treated as a new initial condition.

Furthermore, the optimization-based formulation readily accommodates additional safety constraints through CBFs, highlighting the flexibility of the proposed framework for combinatorial stabilization. While the integration of CBFs is not the primary focus of this work, it is worth noting that, in practice, feasibility is often maintained by introducing slack variables that relax the Lyapunov condition. Care must be taken, however, as such relaxations can compromise (combinatorial) stability guarantees. In this work, safety constraints are instead handled through the Hamilton-Jacobi reach-avoid framework, which incorporates both safety and reachability (or stability) within a single value function.

\section{Combinatorial Backward Reachability}
\label{section:HJR}

The CLF-based construction of Section~\ref{section:CLF} requires a local CLF for each candidate target, which may be difficult to obtain for general nonlinear systems. Moreover, known constructive methods for obtaining such functions typically produce regions of attraction that are infinite-horizon in nature,
and thus do not capture settings where reachable regions shrink over time due to fuel depletion, battery discharge, or an approaching mission deadline.
HJR analysis provides a systematic alternative by computing viscosity CLFs/CBFs~\cite{Hirsch2025ViscosityCBF}. More generally, HJR constructs reach-avoid value functions directly while naturally accommodating finite time horizons, multiple obstacles, and control constraints~\cite{Margellos11_ReachAvoid, Fisac15_ReachAvoid}. This section develops an HJR-based analogue to the combinatorial stabilization framework by replacing the CLF-induced regions of attraction $\Rc_j$ with reach-avoid sets derived from HJ value functions.

HJR analysis aims to characterize backward reach-avoid (BRA) sets, i.e., sets of initial conditions from which there exists a measurable control signal that steers the system to a target set within a given time horizon while avoiding unsafe states. 
We first recall the standard definition.

\begin{definition}[Backward Reach-Avoid Sets {\cite{mitchell2005time, Margellos11_ReachAvoid}}]
\label{def:BRAS}
Consider~\eqref{sys:ctrl_affine} with a target set $\mathcal{X}^\star_j\subset \real^n$, an obstacle set $\mathcal{O}\subset\real^n$, and the control constraint set $\Uc\subseteq\real^m$. Given a negative time horizon $\tau<0$, a measurable control signal $\map{\bu}{[\tau,0]}{\Uc}$, and an initial condition $\bx\in\real^n$, let $\map{\bx_\bu(\cdot;\bx)}{[\tau,0]}{\real^n}$ denote the corresponding state trajectory.  The time-varying \emph{backward reach-avoid (BRA) set} is defined as:
\begin{align*}
    \mathrm{BRA}(\mathcal{X}^\star_j,& \mathcal{O}, \tau) := \bigl\{
    \bx \in \mathbb{R}^n \;\big|\;
    \exists\, \bu(\cdot) \in \mathbb{U}~\text{and} ~s \in [\tau, 0]\\
    \text{s.t.}~&\bx_{\bu}(s; \bx) \in \mathcal{X}^\star_j \;\text{and}\;
    \bx_{\bu}(\sigma; \bx) \notin \mathcal{O},\;
    \forall \sigma \in [\tau, s] \bigr\},
\end{align*}
where $\mathbb{U}$ is the set of all measurable control signals.~\hfill $\diamond$
\end{definition}
\begin{remark}
    In the literature, the sets in Definition~\ref{def:BRAS} are commonly referred to as backward reach-avoid \emph{tubes}, as the system may reach the target at any time $s \!\in \![\tau, 0]$ rather than at exactly the final time $0$. The term backward reach-avoid \emph{set} is reserved for the latter case. However, under the assumption of control invariant target sets, the two coincide~\cite{Begzadic25_SafeResets_Learning_HJB}, and we use the terms interchangeably throughout.
\end{remark}
The set $\mathrm{BRA}(\Xc^\star_j, \Oc, \tau)$ is analogous to the region of attraction~$\Rc_j$ considered in the stabilization case, while additionally accounting for a finite time horizon. We now extend the reach-avoid notion to the combinatorial setting.

\begin{definition}[($r$-out-of-$p$) Combinatorial BRA Sets]
\label{def:combo_BR}
Consider~\eqref{sys:ctrl_affine} with multiple target sets $\{\mathcal{X}_j^\star\}_{j\in\until p}$, an obstacle set $\mathcal{O}\subset\real^n$, and the control constraint set $\Uc\subseteq\real^m$. Given a negative time horizon $\tau<0$, let $ \mathrm{BRA}(\mathcal{X}_j^\star, \mathcal{O}, \tau)$ denote the BRA set associated with target $\mathcal{X}^\star_j$. With $r\in [p]$, the set:
\[
\mathcal{\tilde{R}}(\tau) := \Bigl\{ \bx \in \mathbb{R}^n \;\Big|\; \bigl|\{j \in [p] : \bx \in \mathrm{BRA}(\mathcal{X}^\star_j, \mathcal{O}, \tau)\}\bigr| \geq r \Bigr\},
\]
is called a \emph{($r$-out-of-$p$) combinatorial backward reach-avoid set} over the negative time horizon $\tau$.~\hfill $\diamond$
\end{definition}

\subsection{Combinatorial reach-avoid value functions}
To compute $\mathrm{BRA}(\mathcal{X}^{\star}_j, \mathcal{O}, \tau)$, we employ Hamilton-Jacobi reachability (HJR). We assume that the target sets $\{\mathcal{X}^\star_j\}_{j\in\until p}$ are control invariant and that the target sets and the obstacle set are defined with continuously differentiable functions $\ell_j : \mathbb{R}^n \to \mathbb{R}$ and $s : \mathbb{R}^n \to \mathbb{R}$ as:
\[
\mathcal{X}^{\star}_j = \{\bx \in \mathbb{R}^n \mid \ell_j(\bx) \ge 0\},
\quad
\mathcal{O} = \{\bx \in \mathbb{R}^n \mid \textcolor{black}{s(\bx) <} 0\}.
\]
For control-invariant target sets, the reach-avoid value function $V_j : \mathbb{R}^n \times (-\infty, 0] \to \mathbb{R}$ is the viscosity solution of the variational inequality \cite{Begzadic25_SafeResets_Learning_HJB}:
\begin{align}
&0 = \min\!\bigl\{s(\bx) - V_j(\bx,\tau),\;\tfrac{\partial V_j}{\partial \tau}(\bx,\tau) + H\bigl(\nabla_{\!\bx} V_j(\bx,\tau), \bx\bigr)\bigr\}, \notag\\
&V_j(\bx,0) = \min\{\ell_j(\bx),\, s(\bx)\},
\label{eq:hjb_pde}
\end{align}
for $\tau < 0$ and the control Hamiltonian 
$
    H(\boldsymbol{\lambda}, \bx) =
    \max_{\bu \in \mathcal{U}}\,
    \boldsymbol{\lambda}^\top
    \bigl(\bf(\bx) + \bg(\bx)\bu\bigr)
$~\cite{Fisac15_ReachAvoid, Begzadic25_SafeResets_Learning_HJB}. The value function encodes reach-avoid feasibility such that $V_j(\bx,\tau) \ge 0$ if and only if there exists a control that steers the system from state $\bx$ to $\mathcal{X}^{\star}_j$ while avoiding $\mathcal{O}$ over the horizon $[\tau, 0]$. That is:
\[
\textcolor{black}{\mathrm{BRA}(\mathcal{X}^{\star}_j, \mathcal{O}, \tau) = \{\bx \in \mathbb{R}^n \mid V_j(\bx, \tau) \ge 0\}.}
\]

Consider the goal of reaching a target set $\mathcal X_j^\star$ within a fixed horizon $T>0$ from an initial condition $\bx(0)\in \mathrm{BRA}(\mathcal X_j^\star,\mathcal O,-T)$. Then there exists a control that steers the system to $\mathcal X_j^\star$ by time $t=T$ while avoiding $\mathcal O$. As time progresses, the remaining horizon decreases, and so the set of states from which the target can still be reached safely within the remaining time shrinks accordingly. Therefore, along the trajectory, the state must remain in the time-varying set:
\[
\mathrm{BRA}(\mathcal X_j^\star,\mathcal O,\tau_1(t)),
\qquad \tau_1(t):=t-T,\quad t\in[0,T],
\]
Equivalently, one may introduce $\tau_1$ as an internal state satisfying $\dot\tau_1\!=\!1$ with the initial condition $\tau_1(0)\!=\!-T$, evolving linearly from $-T$ to $0$. As the BRA set is characterized by the value function, we realize our steering goal through rendering $V(\bx(t),\tau_1(t))$ nonnegative for time $t\!\in\![0,T]$.

In addition, we seek to ensure the state remains within the combinatorial BRA set throughout its trajectory. This guarantees that the system retains the ability to reach at least $r$ target sets at all times, thus preserving contingency options during the maneuver. 
Mathematically, the combinatorial BRA set can be expressed directly in terms of the value functions using order statistics $\tilde h(\bx,\tau) = \max^{(r)}\{V_j(\bx,\tau)\}_{j\in\until p}$:
\begin{equation}
\label{eq:hjb_ctilde}
\tilde \Rc(\tau) = \setdefb{\bx\in\real^n}{\tilde h(\bx,\tau)\geq 0}.
\end{equation}
Then, at all time $t\in[0,T]$, the system trajectory $t\!\mapsto\!\bx(t)$ is required to remain in $\tilde{\mathcal{R}}(\tau_2(t))$ where $t\!\mapsto \!\tau_2(t)$ captures how the time horizon for these contingency targets may change over time, potentially nonlinearly due to resource constraints such as available fuel or battery charge.  

We refer to this combined objective as \emph{combinatorial targeting}, which we pose as a controller design problem. Because the controller must track the evolving horizons~$\tau_1$ and $\tau_2$, which are not part of the physical state, combinatorial targeting naturally requires a dynamic feedback law. In particular, in our controller construction that follows, these horizons are incorporated directly as the internal states $\bz = (\tau_1, \tau_2)$. This motivates the following definition.
\begin{definition}
Consider a dynamic controller $(\bx,\bz)\!\mapsto\! \bk(\bx,\bz)$ equipped with an internal variable $\bz\!\in\!\real^q$ evolving under $\dot \bz\! =\! \bphi(\bx,\bz)$  with the internal dynamics $\map{\bphi\!}{\!\real^n\times\real^q\!}{\!\real^q}$. The controller $\bk$ \emph{($r$-out-of-$p$) combinatorially targets} a set $\mathcal{X}^\star_{j^\dagger}\! \in\! \{\mathcal{X}^{\star}_j\}_{j \in \until{p}}$ over a targeting horizon $T\!>\!0$ if any solution $t \!\mapsto\! \bx(t)$ of the closed-loop system under $\bu\!=\!\bk(\bx,\bz)$ satisfies:
\begin{itemize}
    \item if $\bx(0)\in\tilde\Rc(\tau_2(0))$, $\bx(t)\in\tilde \Rc(\tau_2(t))$ for all $t \in [0,T]$;
    \item if $\bx(0)\in\tilde{\Rc}(\tau_2(0)) \cap \mathrm{BRA}(\mathcal{X}^{\star}_{j^\dagger}, \mathcal{O}, -T)$, there exists $t^\star \in [0,T]$ such that $\bx(t^\star) \in \Xc^\star_{j^\dagger}$.
\end{itemize}
for a given time-varying negative horizon $t\mapsto \tau_2(t)$.~\hfill $\diamond$
\end{definition}

\subsection{Combinatorial targeting controllers}
In order to synthesize a filter analogous to~\eqref{eq:combo_clf_cbf_qp} we first make the following simplifying assumption:
\begin{assumption}
\label{ass:hjb_smooth}
For each $j \!\in\! [p]$ and a given time horizon $T>0$, $V_j\!: \mathbb{R}^n \!\times\! [-T, 0] \!\to\! \mathbb{R}$ is continuously differentiable on $\{(\bx, \tau) \!:\! V_j(\bx, \tau) \!\ge\! 0,\, \tau \!\in\! [-T, 0]\}$.
\end{assumption}
\begin{remark}[Technical gap in differentiability]
    Assumption~\ref{ass:hjb_smooth} is restrictive and, in general, not satisfied in practice. HJ value functions are typically only viscosity solutions of~\eqref{eq:hjb_pde} and are rarely continuously differentiable, requiring tools from nonsmooth analyses including generalized derivatives, and in our context, nonsmooth barrier function~\cite{Glotfelter17_NonsmoothCBF}. In particular, the viscosity CBF framework of~\cite{Hirsch2025ViscosityCBF} provides conditions under which forward invariance can be certified without requiring differentiability. However, incorporating such nonsmooth conditions into optimization-based controller synthesis remains challenging, as practical methods for computing and enforcing subdifferential constraints are still limited. Despite this limitation, HJR has been successfully applied in practice~\cite{Begzadic25_SafeResets_Learning_HJB} through numerical approximations. The purpose of this section is therefore to demonstrate the proposed combinatorial framework naturally extends to HJ-based constructions under a simplifying smoothness assumption, while a rigorous treatment in nonsmooth value functions remains an important direction for future work.~\hfill $\bullet$
\end{remark}

With differentiability of $V_j$, we can properly define a \textbf{combinatorial reach-avoid filter} via a QP:
\begin{align}
& {\small\begin{bmatrix}
    \bk(\bx,\tau_1, \tau_2)\!\\
    \boldsymbol{\theta}(\bx,\tau_1, \tau_2)\!
\end{bmatrix}} \!=\! \argmin_{\bu \in \mathcal{U},\,\boldsymbol{\omega}\in\realnonneg^2 }\;
\!\!\frac{1}{2}\|\bu \!-\! \bu_{\mathrm{nom}}(\bx,\tau_1)\|^2\! +\!
d_\omega\Vert\boldsymbol{\omega}^2\Vert \nonumber\\
& \text{s.t. }\;\forall\, j \in [p],\;\label{eq:combo_qp_hjr}\\
&  \dot{V}_{j^\dagger}(\bx,\bu,\tau_1)\!\ge\!
-\!\alpha_{j^{\dagger}}\!\bigl(V_{j^\dagger}(\bx,\tau_1)\!\bigr)\! \!-\! \omega_1\mathrm{ReLU}\bigl(\!-\alpha_{j^{\dagger}}(V_{j^\dagger}(\bx,\tau_1)\!)\!\bigr)
\nonumber\\
&  \dot{V}_j(\bx,\bu,\tau_2)\!\ge\!
-\alpha_{\tilde{h}}\bigl(V_j(\bx,\tau_2)\bigr)\!
\!-\! \omega_2\rho\bigl(V_j(\bx,\tau_2)\! \!-\! \tilde{h}(\bx,\tau_2)\bigr), 
\nonumber
\end{align}
where $\dot{V}_j(\bx,\bu,\tau)\! =\! \dot{\tau}\,\partial_{\tau} V_{j}(\bx,\tau)
\!+\! L_{\bf}V_{j}(\bx,\tau)
\!+ \!L_{\bg}V_{j}(\bx,\tau)\bu$, $d_\omega > 0$, $\alpha_{j^{\dagger}}, \alpha_{\tilde{h}} \in \mathcal{K}_e$, and $\rho : \mathbb{R} \to \mathbb{R}_{\ge 0}$ is positive definite. The internal dynamics are given by $\dot\tau_1=1$ and $\dot\tau_2 = \frac{d\tau_2}{dt}(t)$ for a given $t\mapsto \tau_2(t)$ (with a slight abuse on the notation).

Based on the above formulation, the following  result establishes combinatorial targeting.
\begin{proposition}\label{thm:combo_HJR}
    Consider the control-affine system~\eqref{sys:ctrl_affine} with a collection of reach-avoid value functions $\{V_j\}_{j\in\until p}$ associated with control invariant target sets $\{\mathcal{X}^{\star}_j\}_{j\in\until p}$, obstacle set $\mathcal{O}$, and compact control constraint set $\mathcal{U}$. Fix an index $j^\dagger\in\until p$. In addition to Assumption~\ref{ass:hjb_smooth}, suppose that the QP controller defined in~\eqref{eq:combo_qp_hjr} is strictly feasible for all $\bx\!\in\!\tilde \Rc(\tau_2(t))$ and all $t\in[0,T]$, under a given time-varying $\map{\tau_2\!}{\![0,T]\!}{\real_{<0}}$. Then, the controller $\bk$ ($r$-out-of-$p$) combinatorially targets $\mathcal{X}^{\star}_{j^\dagger}$ over a targeting horizon $T>0$.
\end{proposition}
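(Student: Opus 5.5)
The plan mirrors the proof of Theorem~\ref{thm:combo_CLF_CBF}, now in the augmented state $(\bx,\tau_1,\tau_2)$ with the horizons treated as internal states.

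\emph{Step 1: invariance of the combinatorial BRA set.} Strict feasibility of~\eqref{eq:combo_qp_hjr} on $\tilde\Rc(\tau_2(t))$ for $t\in[0,T]$, together with continuity of the data (Assumption~\ref{ass:hjb_smooth}, compactness of $\Uc$), gives continuity of $\bk$ and $\boldsymbol\theta$ on a neighborhood of the time-varying set, by the Slater-type argument of Proposition~\ref{prop:slater}. The second family of constraints is exactly~\eqref{eq:combo_cbf_condition}, written for the time-varying functions $h_j(\bx,t):=V_j(\bx,\tau_2(t))$. Since $\dot V_j$ already contains the term $\dot\tau_2\,\partial_\tau V_j$, I will apply Lemma~\ref{lem:combo_CBF} to the autonomous system in $(\bx,t)$, appending $\dot t=1$. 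This yields forward invariance of $\{(\bx,t):\tilde h(\bx,\tau_2(t))\ge0\}$, which is the first bullet of the definition of combinatorial targeting.

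\emph{Step 2: reaching the selected target.} Suppose $\bx(0)\in\tilde\Rc(\tau_2(0))\cap\mathrm{BRA}(\Xc^\star_{j^\dagger},\Oc,-T)$, so $V_{j^\dagger}(\bx(0),-T)\ge0$. Whenever $V_{j^\dagger}\ge0$ we have $-\alpha_{j^\dagger}(V_{j^\dagger})\le0$, so the ReLU term vanishes. The first constraint then reduces to $\frac{d}{dt}V_{j^\dagger}(\bx(t),\tau_1(t))\ge-\alpha_{j^\dagger}(V_{j^\dagger})$ with $\dot\tau_1=1$. A standard barrier/comparison argument, as in Lemma~\ref{lem:CBFinvar}, applied to the augmented state $(\bx,\tau_1)$ gives $V_{j^\dagger}(\bx(t),t-T)\ge0$ for all $t\in[0,T]$; this uses the relaxation being inactive on the set. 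Taking $t=T$ gives $V_{j^\dagger}(\bx(T),0)=\min\{\ell_{j^\dagger}(\bx(T)),s(\bx(T))\}\ge0$, so $\ell_{j^\dagger}(\bx(T))\ge0$ and $\bx(T)\in\Xc^\star_{j^\dagger}$. Hence $t^\star=T$ works. The bound $s\ge0$ additionally gives obstacle avoidance at the final time, and $V_{j^\dagger}\le s$ from~\eqref{eq:hjb_pde} gives it along the trajectory.

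\emph{Main obstacle.} The Lemma~\ref{lem:CBFinvar}-type argument needs the constraint to hold, and $\bk$ to be defined and continuous, on a neighborhood of the set $\{V_{j^\dagger}\ge0\}$. Assumption~\ref{ass:hjb_smooth}, however, gives differentiability only on that set itself, and the relaxation switches on just outside it. I would handle this with the usual argument at the boundary: if $V_{j^\dagger}=0$ at some time, then $\dot V_{j^\dagger}\ge0$, and a first-exit-time contradiction via one-sided derivatives rules out leaving the set. If that is too delicate, I would instead require the constraint only on the set and invoke Nagumo's theorem on the closed set. Existence of solutions on $[0,T]$ follows from continuity of $\bk$ on the relevant region; uniqueness is not needed, since the claim is made for every solution.
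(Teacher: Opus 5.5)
\paragraph{Where the proposal has a gap.} Your Step~1 and your terminal-time argument match the paper's proof. The gap is in your resolution of the ``main obstacle'', which is exactly where the paper's key idea sits.

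Suppose you only know $\dot V_{j^\dagger}\ge-\alpha_{j^\dagger}(V_{j^\dagger})$ on $\{V_{j^\dagger}\ge0\}$, hence $\dot V_{j^\dagger}\ge0$ on its boundary. That cannot rule out exit. A first-exit-time argument at a point with $V_{j^\dagger}=0$ and $\dot V_{j^\dagger}=0$ yields no contradiction.

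Nagumo's theorem on the closed set gives invariance of \emph{every} solution only when solutions are unique, plus a regularity condition such as $\nabla_{\bx}V_{j^\dagger}\neq0$ on the boundary. Here $\bf,\bg$ and the QP controller are merely continuous, so uniqueness is unavailable. Your remark that uniqueness is not needed because the claim concerns every solution has it backwards: without uniqueness, Nagumo gives only weak invariance, i.e., existence of one viable solution.

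A scalar example shows this. For $\dot x=-\sqrt{|x|}$ with $h(x)=x$, we have $\dot h\ge0$ at $h=0$, yet $x(t)=-t^2/4$ leaves $\{h\ge0\}$. This is precisely why Lemma~\ref{lem:CBFinvar} demands the inequality on a neighborhood. Dropping the constraint off the set, as in your fallback, discards the information you need. Note also that where $\theta_1\ge1$, the relaxed constraint permits $V_{j^\dagger}$ to keep decreasing.

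\paragraph{The missing idea.} Use the relaxed constraint outside the set and bound the slack there. The proposition's hypothesis already presupposes the QP is defined at points of $\tilde\Rc(\tau_2(t))$ where $V_{j^\dagger}(\bx,\tau_1)<0$. Strict feasibility then extends to a neighborhood, on which the QP solution is continuous (Slater).

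On $\{V_{j^\dagger}\ge0\}$ the term $\mathrm{ReLU}(-\alpha_{j^\dagger}(V_{j^\dagger}))$ vanishes. So $\omega_1$ enters no constraint, and the penalty forces the optimal $\theta_1=0$. By continuity, $\theta_1<\epsilon<1$ on a neighborhood $\mathcal N$ of $\tilde\Rc\cap\mathrm{BRA}(\Xc^\star_{j^\dagger},\Oc,\tau_1)$.

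Wherever $V_{j^\dagger}<0$ in $\mathcal N$, the first constraint then reads $\dot V_{j^\dagger}\ge-(1-\theta_1)\alpha_{j^\dagger}(V_{j^\dagger})\ge-(1-\epsilon)\alpha_{j^\dagger}(V_{j^\dagger})>0$. This is a genuine barrier condition on a neighborhood. Its $\mathcal K_e$ function equals $(1-\epsilon)\alpha_{j^\dagger}$ on negative arguments and $\alpha_{j^\dagger}$ on nonnegative ones.

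Lemma~\ref{lem:CBFinvar} then gives $V_{j^\dagger}(\bx(t),t-T)\ge0$ on $[0,T]$ for every solution. Your first-exit argument also now works, since $V_{j^\dagger}$ is strictly increasing while negative near the set. Your terminal-time step then completes the proof.
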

\begin{proof}
    In order to apply Lemma~\ref{lem:combo_CBF} to the time-varying set~$\tilde R(\tau_2(t))$, we consider an augmented closed-loop system:
    $$
    \dot{\bx} = \bf(\bx) + \bg(\bx)\bk(\bx,\tau_1,\tau_2),~\dot \tau_1 = 1,~\dot \tau_2 =  \frac{d\tau_2}{dt}(t),~\dot t = 1
    $$
    with state $(\bx,\tau_1,\tau_2,t)$. 
    Under this representation, $\tilde \Rc(\tau_2(t))$ becomes a time-invariant set in augmented state space, and Lemma~\ref{lem:combo_CBF} guarantees 
    $\bx(t)\!\in\!\tilde \Rc(\tau_2(t))$ for all $t\! \in\! [0,T]$ whenever $\bx(0)\!\in\!\tilde\Rc(\tau_2(0))$.\footnote{A complete justification extending Lemma~\ref{lem:combo_CBF} to time-varying sets requires characterization of forward invariance over time interval. We omit these technical details for brevity.}
    We now show that $\bk$ steers the state into $\mathcal{X}^\star_{j^\dagger}$ in finite time if $\bx(0)\!\in\!\tilde{\Rc}(\tau_2(0)) \!\cap\! \mathrm{BRA}(\mathcal{X}^{\star}_{j^\dagger},\Oc,T)$.

    For all $\tau_1< 0$ and all $\bx\!\in\!\tilde\Rc\cap \mathrm{BRA}(\mathcal{X}^{\star}_{j^\dagger},\Oc,\tau_1)$, the $\mathrm{ReLU}$ term in~\eqref{eq:combo_qp_hjr} vanishes since $V_{j^\dagger}(\bx,\tau_1)\!\geq\! 0$. Hence, a standard CBF inequality as in~\eqref{eq:cbf_cond} holds for the function $V_{j^\dagger}$. In addition, because the $\mathrm{ReLU}$ term is zero, the optimal choice of $\omega_1$ in the QP~\eqref{eq:combo_qp_hjr} is zero. Hence, from the continuity of the QP controller (under strict feasibility, i.e., Slater's condition  assumption~\cite{PM-AA-JC:25}), there is a neighborhood $\mathcal N$ of $\Rc\cap \mathrm{BRA}(\mathcal{X}^{\star}_{j^\dagger},\Oc,\tau_1)$ where $\theta_1(\bx,\tau_1,\tau_2)< \epsilon$ for all $\bx\in\mathcal{N}$. Therefore, the standard barrier condition holds for $V_{j^\dagger}$  on some neighborhood $\cal N$, with a $\Kc_e$ function defined as $(1-\epsilon)\alpha(s)$ for $s<0$ and $\alpha(s)$ for $s\ge 0$ . Hence, for any system trajectory $t\mapsto \bx(t)$ initialized with $\bx(0)\in\tilde\Rc\cap \mathrm{BRA}(\mathcal{X}^{\star}_{j^\dagger},\Oc,\tau_1(0))$, we have $V_{j^\dagger}(\bx(t), t\!-\!T) \geq 0$ for all $t\! \in \![0, T]$, where we have substituted $\tau_1$ using the definition $\tau_1(t)=t-T$.
    In other words, $\bx(t)\!\in\! \mathrm{BRA}(\Xc^\star_{j^\dagger},\Oc, t\!-\!T)$ at all time $t\in[0,T]$, and therefore, $\bx(t)\notin\Oc$. In addition, at time $t=T$ in particular, the terminal condition~\eqref{eq:hjb_pde}, $V_{j^\dagger}(\bx(T), 0) \!=\! \min\{\ell_{j^\dagger}(\bx(T)), s(\bx(T))\}\! \geq\! 0$ implies $\ell_{j^\dagger}(\bx)\geq 0$, and hence, $\bx(T) \in \mathcal{X}^\star_{j^\dagger}$. 
    This establishes that $\bk$ ($r$-out-of-$p$) combinatorially targets $\mathcal{X}^\star_{j^\dagger}$.
\end{proof}
Proposition~\ref{thm:combo_HJR} shows that, despite the combinatorial nature of the problem, combinatorial targeting can be achieved with a combinatorial reach-avoid filter that involves only $p+1$ constraints. More importantly, the proposed formulation naturally enables real-time switching between target sets. At any time, the active index $j^\dagger$ may be updated to one of the indices certified by the combinatorial BRA~$\tilde \Rc(\tau_2(t))$, allowing the controller to adapt online. 
Note that at such switching time $t_{\rm s}$, the internal variable $\tau_1$ is updated to $\tau_2$ to ensure feasibility, and the contingency target can be reached within the time period $[t_{\rm s},t_{\rm s}+\tau_2(t_{\rm s})]$.

\begin{figure}
    \centering
    \includegraphics[width=1\linewidth]{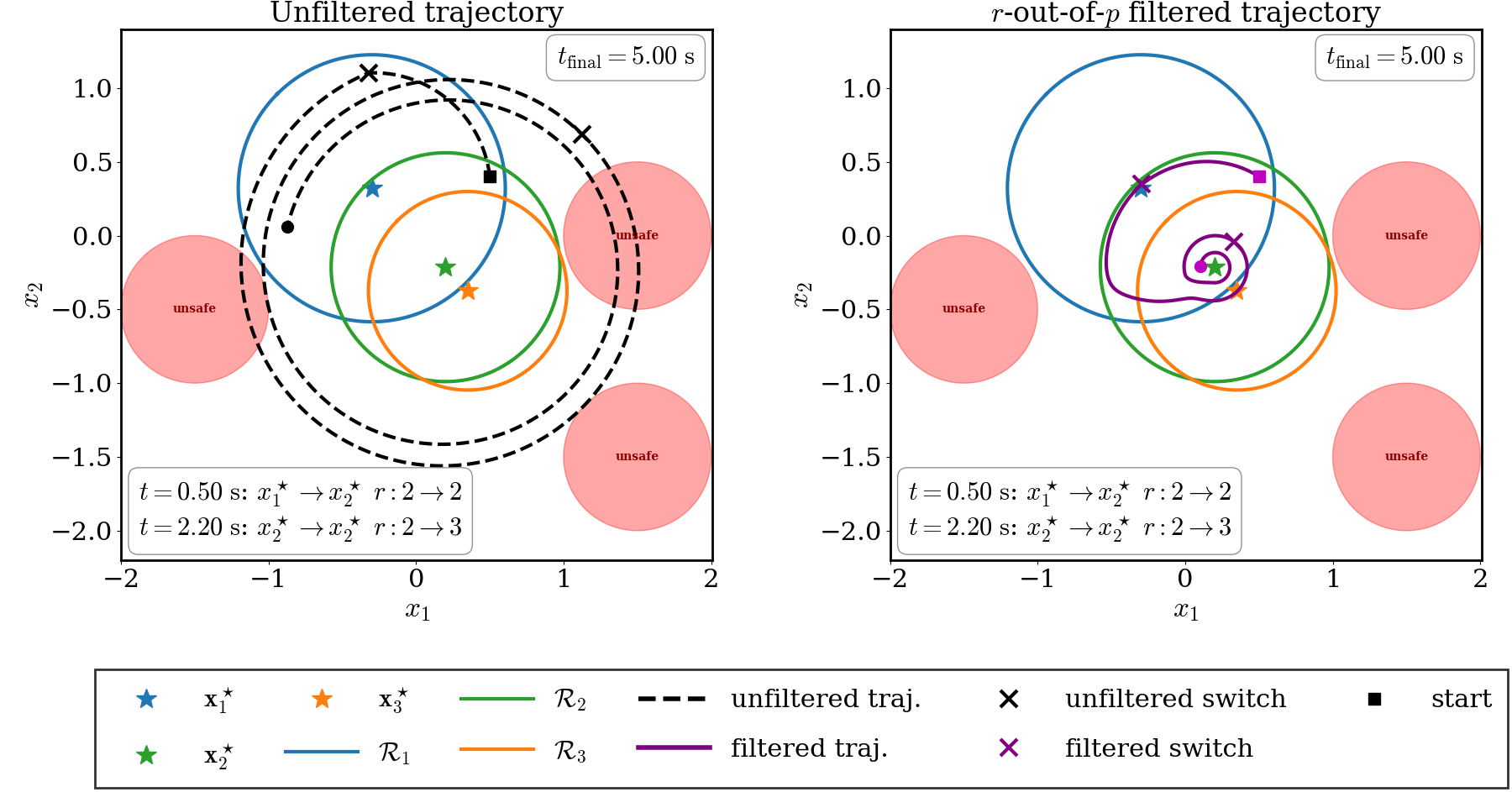}
    \caption{Example 1: Linear system with $p=3$ targets. The simulation starts with $j^\dagger=1$ and $r=2$, switches to $j^\dagger=2$ at $t=0.5\,\mathrm{s}$, and later raises $r$ to $3$. The filtered trajectory remains within $r$-out-of-$p$ safe sets throughout, while the nominal trajectory violates obstacle constraints.\vspace*{-10pt}}
    \label{fig:linsystem}
\end{figure}

\section{Simulations \label{section:results}}
\subsection{Example 1: Linear System (CLF-based)}
 Consider the linear system with $\bx\!\in\!\mathbb{R}^2$, $\bu\!\in\!\mathbb{R}$:
\begin{align*}
\dot \bx \!=\! A\bx+B\bu,\;\;  
A=\scalebox{0.85}{$\begin{bmatrix}0.9&-3.0\\[2pt]4.0&-0.1\end{bmatrix}$},
\;\,
B=\scalebox{0.85}{$\begin{bmatrix}1\\[1pt]1\end{bmatrix}$},
\end{align*}
Three target equilibria $\bx_j^\star$, $j\!\in\![3]$, and three obstacle sets
$\mathcal O_\ell\!=\!\{\bx\in\mathbb R^2:\|\bx\!-\!\bp_q\|_2\le R_q\}$, $q\!=\!1,2,3$ are given. The nominal controller for each  selected target $\bx_{j^\dagger}^\star$ is: 
\begin{equation*}
\bu_{\mathrm{nom}}(\bx;\bx_{j^\dagger}^\star)=\bu_{j^\dagger}^\star - K(\bx-\bx_{j^\dagger}^\star),\quad
 K=\begin{bmatrix}1&\!0\end{bmatrix},
\end{equation*}
where $\bu_{j^\dagger}^\star$ satisfies $A\bx_{j^\dagger}^\star \!+\! B\bu_{j^\dagger}^\star \!=\! \mathbf{0}$. We define CLFs $V_j(\bx)\!:=\!(\bx\!-\!\bx_j^\star)^\top P(\bx\!-\!\bx_j^\star)$ with
positive definite $P\in\real^{n\times n}$ from $(A\!-\!BK)^\top P \!+\! P (A\!-\!BK)\!=\!-I$ and $c_j\!:=\!\nu \min_{q}\;\min_{\|\bx\!-\!\bp_q\|_2=R_q} V_j(\bx)$ 
for $\nu\in(0,1)$, so that $\Rc_j$ is the largest obstacle-free CLF sublevel set centered at $\bx_j^\star$. The proposed combinatorial stabilization filter~\eqref{eq:combo_qp_hjr} is implemented\footnote{with $\nu\!=\!0.9$, $d_\omega\!=\!0.1$, $\alpha_{j^\dagger}(s)\!=\!2s$, $\alpha_{\tilde{h}}(s)\!=\!0.18s$, and $\rho(s)\!=\!\alpha_{\tilde{h}}\,s^2$, $\bx_1^\star\!=\![-0.30,\, 0.321]^\top$, $\bx_2^\star\!=\![0.20,\, -0.214]^\top$, $\bx_3^\star\!=\![0.35,\, -0.374]^\top$, $\bp_1\!=\![-1.5,\, -0.5]^\top$, $\bp_2\!=\![1.5,\, 0]^\top$, $\bp_3\!=\![1.5,\, -1.5]^\top$, $R_\ell\!=\!0.5$.} using  CVXPY~\cite{agrawal2018rewriting} and the CLARABEL solver  ~\cite{Clarabel_2024}.
The resulting simulation in Fig.~\ref{fig:linsystem} is initialized with $\bu_{\textrm{nom}}$ and the CLF constraint targeting $\bx_1^\star$ and with the combinatorial requirement $r\!=\!2$. Consequently, the filter preserves membership in the regions of attraction of at least $r$ candidate targets while steering the state toward the selected one, whereas the unfiltered trajectory repeatedly violates safety despite using a stabilizing controller. This behavior is already evident in the start of the simulation as the nominal trajectory is immediatelly modified to stay within reach of a second target, effectively hugging the boundary of the second active safe set. 
At time $t\!=\!0.5\,\mathrm{s}$, the desired equilibrium target is switched to $\bx_2^\star$. As the filtered trajectory was kept within the reach-avoid sets of two targets, this switch can be executed safely, without collision, whereas the unfiltered trajectory intersects the obstacle. A subsequent increase of $r$ to $3$, illustrating the flexibility of the combinatorial formulation.

\begin{figure*}[!t]
    \centering
    \includegraphics[width=0.98\linewidth]{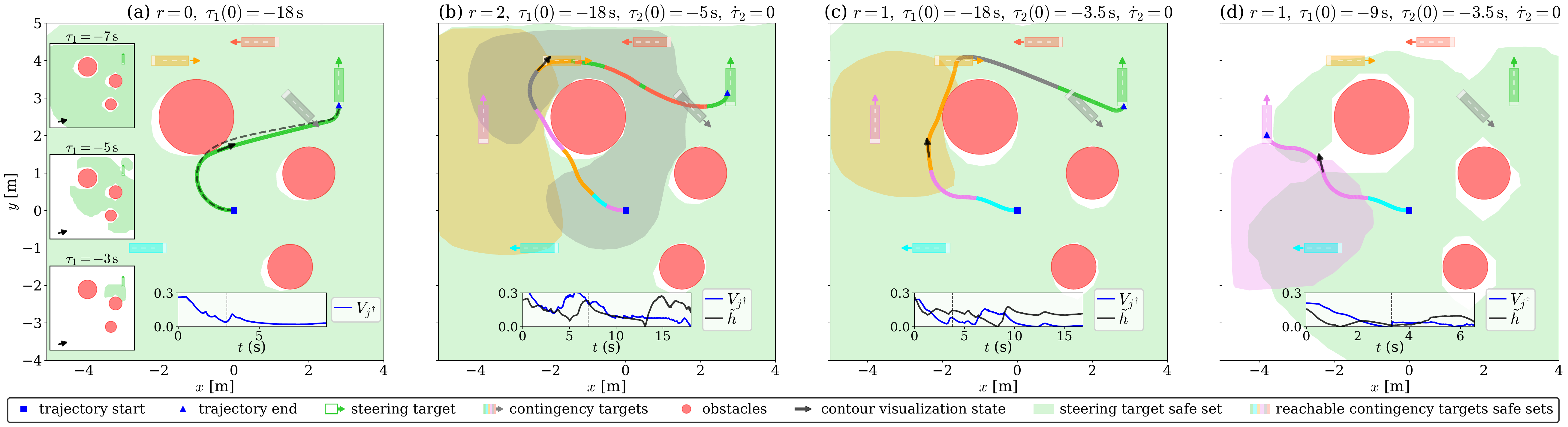}
    \vspace{-4pt}
    \caption{\textcolor{black}{
    Example 2: Simplified aircraft with $p=6$ runway targets navigating between obstacles. Dashed line shows the  unfiltered trajectory, while solid lines show the filtered trajectories colored by a currently active contingency target. 
    Four cases are presented with varying $r$ and horizon times $\tau_1(0), \tau_{2}(0)$. 
    Reach-avoid $0$-superlevel sets are plotted at the state (and time) indicated by the arrow (and dotted line in inset).
    Insets in (a) show the evolution of the steering reach–avoid 0-superlevel set for the green runway for decreasing horizon $\tau_1$.}
    \vspace*{-11pt}}
    \label{fig:nonlinear_HJR}
\end{figure*}

\subsection{Example 2: Airplane Landing (HJR-based)}
\noindent Consider a simplified aircraft model with elevation control:
\begin{align*}
\dot x = v\cos\theta,\;\; \dot y = v\sin\theta,\;\; \dot z = u_3,\;\; \dot\theta = u_1,\;\; \dot v = u_2 - c_v v,
\end{align*}
with control input $\bu  \!\in \!\mathcal{U}\! =\! [-1,1]^3$, where $u_1$ is the angular rate, $u_2$ the forward thrust, $u_3$ the vertical velocity, and $c_v \!=\! 0.3$ is a drag coefficient.
The task is to navigate between building obstacles towards one of $p\!=\!6$ runway targets. Each runway target is defined by a position $(x,y)$ at elevation $z\!=\!0$ and an approach angle~$\theta$. The nominal controller minimizes the planar distance to the steering target until nearby, at which point it minimizes the angle offset and decreases the velocity $v$ and elevation $z$ to zero. We consider a scenario where contingencies have to be reachable within a fixed horizon, i.e. $\tau_{2}(0)\!=\!c\!<\!0$ and $\dot{\tau}_{2}\!=\!0$, while the selected target set is shrinking throughout the trajectory.
The proposed combinatorial reach-avoid filter~\eqref{eq:combo_qp_hjr} is applied with reach-avoid value functions computed via HJR.\footnote{for $\bx_1^\star\!=\![3,3, \pi / 2], \> \bx_2^\star\!=\![-2,-1, \pi], \> \bx_3^\star\!=\![-2, 4,0], \> \bx_4^\star=[-4, 2, \pi/2], \> \bx_5^\star\!=\![1, 3, -\pi/4], \> \bx_6^\star\!=\![4.5,2,\pi]$. $\alpha_{j^\dagger}(s)\!=\!2s, \alpha_{\tilde{h}}(s)=s$, $\mathcal{O}_1=B((2,1), 0.7), \> \mathcal{O}_2=B((-1, 2.5), 1), \> \mathcal{O}_3=B((1.5, -1.5), 0.6)$.}
Figure~\ref{fig:nonlinear_HJR} illustrates four scenarios of steering to the green runway, with subplots showing how the reach-avoid target set shrinks with the decreasing available time horizon $\tau_1$. Without contingency requirements in a), the nominal trajectory collides with an obstacle, while the filtered trajectory reaches the target safely. Adding $r\!=\!2$ contingencies with $\tau_{2}(0)\!=\!-5$\,s in b) forces the filtered trajectory to keep two runways reachable throughout. Reducing to $r=1$ with a shorter horizon of $\tau_{2}(0)\!=\!-3.5$\,s in c) produces a more direct trajectory. When the available time to land $\tau_1(0)$ is shortened in d), the original target becomes infeasible and the filter switches $j^\dagger$ to a contingency runway (pink). In all four scenarios, the inset plots confirm that the steering and pivot value functions remain nonnegative along the filtered trajectory, certifying that the combinatorial reach-avoid guarantee is maintained.


\section{CONCLUSIONS }
In this paper we formalized the notion of pursuing a target while retaining contingency options, and provided tractable QP-based filters with $p\!+\!1$ constraints for enforcing it. For stabilization problems, we encoded CLF-based steering and combinatorial barrier conditions to guarantee combinatorial stabilization. For finite-horizon or resource depletion problems, we developed a parallel framework using HJR value functions that guarantees targeting while reataining $r$-out-of-$p$ contingency targets. Both frameworks were validated on examples demonstrating safe target switching where nominal controllers alone violate safety. Future work will include disturbances and investigate the nonsmoothness in the HJR value functions, further expanding the applicability of the proposed combinatorial framework.

\bibliographystyle{IEEEtran}
\bibliography{ifacconf}

\end{document}